\numberwithin{equation}{section}
\newtheorem*{definition*}{Definition}
\newtheorem{lemma}{Lemma}
\newtheorem*{lemma*}{Lemma}
\newtheorem{example}{Example}
\def\defeq{:=}
\def\P{{\mathbb P}}     
\def\E{{\mathbb E}}
\def\T{{\mathbb T}}
\newcommand{\len}[2]{\ell(#1 \mid #2)}
\newcommand{\lensub}[2]{\ell^+(#1 \mid #2)}
\newcommand{\cost}[2]{c(#1 \mid #2)}
\newcommand{\gencost}[3]{#1(#2 \mid #3)}
\definecolor{gr}{rgb}{0.03, 0.7, 0.29}
\definecolor{Blue}{rgb}{0,0,1}
\begin{document}
	
	\title{Inconsistency of parsimony under the multispecies coalescent}
	\author{
		Daniel A. Rickert\textsuperscript{1,3,*}
		\and 
		Wai-Tong (Louis) Fan\textsuperscript{3,4,5}
		\and
		Matthew W. Hahn\textsuperscript{1,2}
	}
	
	\maketitle
	
	\vspace{-20pt}
	\begin{center}
		\textsuperscript{1}\textit{\small{Department of Biology, Indiana University, 1001 E 3rd St, 47405, Bloomington, IN, USA}}\\
		\textsuperscript{2}\textit{\small{Department of Computer Science, Indiana University, 700 N Woodlawn Ave, 47405, Bloomington, IN, USA}}
		\textsuperscript{3}\textit{\small{Department of Mathematics, Indiana University, 831 E 3rd St, 47405, Bloomington, IN, USA}}\\
		\textsuperscript{4}\textit{\small{School of Data Science and Society, University of North Carolina, 211 Manning Dr, 27514, Chapel Hill, NC, USA}}\\
		\textsuperscript{5}\textit{\small{Statistics and Operations Research, University of North Carolina, 318 E Cameron Ave, 27599, Chapel Hill, NC, USA}}\\
		\textsuperscript{*}\small{Corresponding author: daricker@iu.edu}
	\end{center}

	\begin{abstract}
		While it is known that parsimony can be statistically inconsistent under certain models of evolution due to high levels of homoplasy, the consistency of parsimony under the multispecies coalescent (MSC) is less well studied. Previous studies have shown the consistency of concatenated parsimony (parsimony applied to concatenated alignments) under the MSC for the rooted 4-taxa case under an infinite-sites model of mutation; on the other hand, other work has also established the inconsistency of concatenated parsimony for the unrooted 6-taxa case. These seemingly contradictory results suggest that concatenated parsimony may fail to be consistent for trees with more than 5 taxa, for all unrooted trees, or for some combination of the two. Here, we present a technique for computing the expected internal branch lengths of gene trees under the MSC. This technique allows us to determine the regions of the parameter space of the species tree under which concatenated parsimony fails for different numbers of taxa, for rooted or unrooted trees. We use our new approach to demonstrate that while parsimony succeeds in the unrooted 5-taxa case, there are regions of statistical inconsistency for concatenated parsimony for rooted 5+-taxa cases and unrooted 6+-taxa cases. Our results therefore suggest that parsimony is not generally dependable under the MSC.
	\end{abstract}
	
	\vspace{20pt}
	
	\textbf{Keywords}: coalescent, incomplete lineage sorting, concatenation, parsimony, species tree
	
	\maketitle
	
	\clearpage
	
	\section{Introduction}
	
	One of the major goals of phylogenetics is to describe the relationships among organisms. We suppose the evolutionary relationship among $n$ species or taxa can be described by a rooted, binary, and ultrametric species tree $\sigma = (\psi_*, \bm{x})$ with $n$ tips, where $\psi_*$ denotes the rooted binary topology of the species tree, and $\bm{x}$ gives the  branch lengths of $\sigma$. The goal is to be able to infer the species tree $\sigma$, or some component of it, such as the topology $\psi_*$, using data available from the tip species. 
	
	The most common data used to infer species trees come from DNA sequences. DNA sequences are available from every gene (or locus) in a genome. Coalescent-based models give a probability distribution on the gene tree $G$ that represents the evolutionary history of a given locus among sampled individuals \citep{kingman1982coalescent, hudson1990gene}. The gene tree topology, $G$, at a locus is conditionally random given the species tree, $\sigma$, when sampled individuals come from different species. The sequence data at this locus is then conditionally random given $G$, depending on any mutation events that have occurred on it. It is well-understood that the gene tree can be discordant (i.e. have internal branches that disagree) with the species tree for a number of biological reasons, such as introgression or horizontal gene transfer (see for instance \cite{maddison1997gene,edwards2009new}). However, arguably the most well-studied cause of gene tree discordance is incomplete lineage sorting (ILS), in which lineages in a population do not coalesce until entering a further ancestral population. In our analysis, we take ILS to be the sole cause of gene tree discordance, owing to the simplicity of mathematical models of ILS under the standard multispecies coalescent (MSC) model \citep{pamilo1988relationships, rannala2003bayes, rannala2020multispecies}. Recombination events along a chromosome allow neighboring loci to take on different gene tree topologies, all affected by the same biological processes. Accordingly, we assume that the gene tree $G$ at any given locus has a distribution given by the MSC for species tree $\sigma$. This distribution describes the probability distribution of the gene tree of a locus uniformly picked at random among a large number of loci.
	
	ILS is particularly common when the internal branch lengths of the species tree are short. In some regions of the parameter space of the species tree (called the anomaly zone, or AZ for short), a discordant rooted gene tree topology can be more likely to occur than one that matches the species tree topology \citep{Degnan2006}; a similar result also holds for unrooted gene tree topologies \cite{degnan2013anomalous}. Further work \citep{rosenberg2013discordance} has demonstrated that the AZ arises as the result of ILS on consecutive short branches of the species tree. Therefore, even in ideal world where we can infer gene tree topologies directly---essentially ignoring the randomness of sequences evolving on gene trees and the errors involved in inferring gene trees---the 'democratic vote' method that attempts to infer the species tree topology by simply returning the most common gene tree topology over many independent loci will be statistically inconsistent in some areas of parameter space \citep{degnan2009gene}, though more sophisticated methods using gene tree topology frequencies can give statistically consistent estimators of the species tree topology \citep{allman2011identifying, allman2018split}. The fact that the most probable gene tree topology under the MSC is not in general the species tree topology might appear to doom so-called concatenation methods, which combine data from multiple loci into a single alignment and essentially assume that all loci have evolved along the same gene tree. These concatenated alignments can be analyzed using a range of methods (e.g. parsimony, maximum likelihood, neighbor joining [NJ]) in order to return an estimated tree or topology. However, the regions of statistical consistency of such concatenation methods may differ from the AZ. For instance, simulations done in \cite{kubatko2007inconsistency} showed that, under the MSC, concatenated maximum-likelihood (ML) for 4 taxa could be consistent inside the AZ and inconsistent outside of it. This was shown more exhaustively in \cite{MendesHahn2018}, who sampled a far greater number of points in parameter space. 
	
	Perhaps surprisingly, \cite{liu2009phylogenetic} and \cite{MendesHahn2018} found that concatenated parsimony was statistically consistent for the rooted 4-taxa case across parameter space, assuming an infinite-sites mutation model and the MSC model. These findings contrast with the well-known results described in \cite{felsenstein1978cases}, which found an area of parameter space of statistical inconsistency of parsimony (sometimes called the Felsenstein zone). These two sets of results do not conflict, as inconsistency in the Felsenstein zone is caused by similarity due to homoplasy (multiple substitutions at a site), a phenomenon that does not occur in the infinite-sites model. It should also be noted that the analysis in \cite{felsenstein1978cases} does not incorporate gene tree discordance, while the results of \cite{liu2009phylogenetic} and \cite{MendesHahn2018} do. 
	
	In this work, we focus on concatenated parsimony and similar concatenation methods ('concatenated counting methods'), all of which take a concatenated alignment, $\mathcal{A}$, as input, and associate a 'cost' $c(\psi \mid \chi)$ to each candidate topology, $\psi$, and site pattern, $\chi \in \mathcal{A}$. These methods then attempt to infer the species tree topology by returning the candidate topology $\psi$ that minimizes the total cost across the entire concatenated alignment. Such methods are similar to the idea behind concatenated ML, with the difference that concatenated ML attempts to minimize the total negative log-likelihood of a candidate tree (with branch lengths included) rather than just the total cost of a candidate topology (which encodes no information about branch lengths). The idea of cost minimization is also common in various gene tree methods using a collection, $\mathcal{G}$, of (estimated) gene trees and a cost function, $c(\psi \mid G)$, for each $G \in \mathcal{G}$. For instance, taking a cost function representing whether the rooted topology of $G$ matches with $\psi$ gives the 'democratic vote' method, while taking a cost function that returns the number of shared quartets between $\psi$ and the unrooted topology of $G$ motivates ASTRAL \citep{mirarab2014astral}. Other choices of cost function have also been examined, for instance the minimize-deep-coalescence (MDC) criterion \citep{maddison1997gene, maddison2006inferring, than2011consistency}. 
	
	In the same manner that examining the statistical consistency of gene tree methods as more loci are sampled usually requires calculating the frequencies of gene tree topologies under the MSC, examining the statistical consistency of concatenated counting methods involves calculating the expected lengths of branches of gene trees under the MSC. We begin by presenting a novel combinatoric technique to calculate these expected lengths, and demonstrate that the technique correctly recovers known results in the 4-taxa case. We then apply this technique to further understand the success and failure of concatenated parsimony methods for cases with 5 or more taxa (``5+ taxa''), both in the rooted and unrooted cases. While \cite{roch2015likelihood} have demonstrated inconsistency of concatenated parsimony for the unrooted 6-taxa case under a general $r$-state mutation model (even when homoplasy is negligible), their results do not characterize the precise regions of parameter space where parsimony fails; instead, their model assumed the probability of coalescence in internal branches is sufficiently small, justifying the use of Ewens' sampling formula \citep{ewens1972sampling} in computing the probabilities of site patterns. Moreover, \cite{bryanthahn2020} have argued that the results of \cite{roch2015likelihood} only directly demonstrate inconsistency for biologically unrealistic species tree branch lengths. With our method of computing expected branch lengths, we demonstrate that for the previously unexplored unrooted 5-taxa case, concatenated parsimony is consistent under a MSC + infinite-sites model of evolution. We also show that under the same modeling assumptions, concatenated parsimony always has a region of inconsistency for the rooted 5+-taxa case and the unrooted 6+-taxa case, and find that this anomalous region is non-trivial, including many biologically realistic species trees.  We conclude by discussing the implications of our results for the accurate inference of species trees. 
	
	\section{Definitions} \label{S:def_result}
	
	Let $\T_n$ denote the set of all rooted, binary, and labeled tree topologies on $n$ taxa, with tips labeled by the label set $[n] =\{1,2, \ldots, n \}$. For clarity, we will often use uppercase letters $\{A,B,C,\ldots,\}$ as the label set in place of $[n]$ when discussing specific examples. For each rooted topology $\psi \in \T_n$, we let $\overline{\psi}$ be its unrooted analogue, and define $\overline{\T}_n$ to be the collection of all such unrooted $n$-taxa topologies. 
	
	We think of a rooted, binary, and ultrametric species tree with $n$ taxa as a pair $\sigma = (\psi_*, \bm{x})$, where $\psi_* \in \T_n$ denotes the rooted topology of $\sigma$ and $\bm{x}$ is a vector of non-negative branch lengths of the species tree. For convenience, we will assume each branch of the species tree has a constant, large effective population size of $N_e$ diploid individuals, and that all time and lengths are measured in coalescent units of $2N_e$ generations. For our data, we will assume an MSC + infinite-sites model of evolution, as follows:
	\begin{itemize}
		\item Loci are labeled by $i \in \{1,2, \ldots \} $, with the gene tree $G_i$ at locus $i$ being independently and identically distributed according to the multispecies coalescent (MSC) on the species tree $\sigma$, with one sequence sampled per taxa. In particular, any pair of lineages of the gene tree that exist in the same ancestral population coalesce independently at rate $1$. We think of each gene (or lineage) as being labeled by the label of the taxa it is sampled from. For example, the collection of lineages in the set $\{A,B,C\}$ refers to the lineages of the gene tree that are sampled from taxa $A,B,C$.   
		\item Each locus consists of infinitely many sites, and mutations fall on the gene tree $G_i$ according to an infinite sites model with rate $\theta/2$, where $\theta=4N_e\mu$ is the scaled mutation rate, and $\mu$ is the per-generation mutation rate per locus, assumed to be constant throughout time and across loci. 
		\item The alignment at locus $i$ is denoted $A_i$. The $j^{\rm th}$ row of $A_i$ corresponds to the sequence of the sampled gene from taxa $j$, and each column corresponds to the nucleotide data for each of the $n$ samples at a particular site.
		See Figure \ref{fig:concat}a for an illustration.
		\item The ancestral allelic state (i.e. the allelic state immediately prior to a mutation event) at a site is denoted by $0$, and the derived allelic state at a site is denoted by $1$. 
		\item Each site in the alignment is summarized by a site pattern $\chi \subseteq [n]$ denoting the sequences at that site that have the derived allelic type $1$. We will often write a site pattern by a concatenated string of its elements, for example $\chi = AB$ in place of $\chi = \{A, B\}$. 
	\end{itemize}
	Assuming that we can distinguish between the derived and ancestral allelic states at each site, we may attempt inference of the rooted species tree topology $\psi_*$. A \textit{rooted concatenated counting method} assigns a cost $c(\psi \mid \chi)$ to each pair of rooted candidate topology $\psi \in \T_n$ and site pattern $\chi$. The total cost of the candidate topology $\psi$ on an concatenated alignment $\mathcal{A}^{(k)}$, obtained by combining alignments $A_1, \ldots A_k$ (Figure \ref{fig:concat}b), is given by
	\begin{equation} \label{eq:tot_cost}
		\mathrm{Cost}(\psi \mid \mathcal{A}^{(k)}) = \sum_{\chi} c(\psi \mid \chi) \#(\chi \mid \mathcal{A}^{(k})
	\end{equation}
	where the sum is over all $\chi \subseteq [n]$, and $\#(\chi \mid \mathcal{A}^{(k)})$ is the number of occurrences of the site pattern $\chi$ in $\mathcal{A}^{(k)}$. The resulting estimator is given by the topology $\psi$ that minimizes the total cost; if there is a tie between multiple elements of $\T_n$, then we pick one uniformly at random. 
	\begin{figure}[h]
		\centering
		\sbox0{\includegraphics{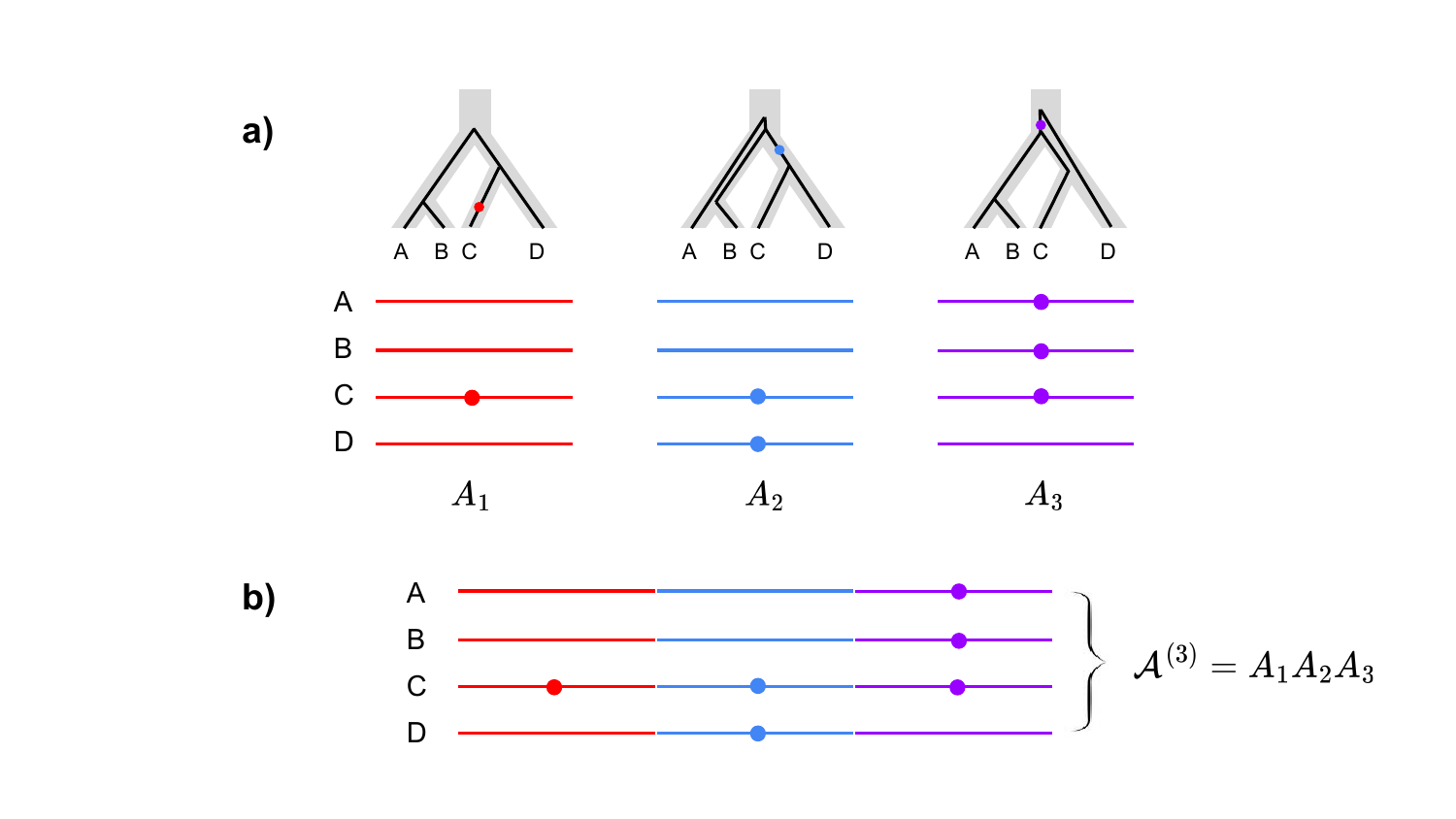}}%
		\includegraphics[trim={.1\wd0} 0 0 {.02\wd0}, clip, width=18cm]{Figure1.pdf}
		\caption{Illustration of the concatenation procedure. a) Three gene trees $G_1, G_2, G_3$ (black lines) and alignments $A_1, A_2, A_3$ for $k=3$ independently evolving loci on the same species tree (shaded gray). Each mutation (circle) represents a $0 \to 1$ (ancestral to derived) transition at a new site in the locus. Only one segregating site per locus is shown for simplicity. b) The resulting concatenated alignment $\mathcal{A}^{(3)}=A_1 A_2 A_3$, with three segregating sites.}\label{fig:concat}
	\end{figure}
	
	On the other hand, when it is not possible to distinguish between the derived and ancestral allelic states, we will need to arbitrarily assign the derived and allelic states at each site. Hence, the site pattern $\chi$ and its complement $[n] \setminus \chi$, resulting from the two possible choices of assignment, must be treated equally in any method of inference. Usually, this limits us to attempting inference of the unrooted species tree topology only. An \textit{unrooted concatenated counting method} assigns a cost $c(\overline{\psi} \mid \chi)$ to each pair of unrooted candidate topology $\overline{\psi} \in \T_n$ and site pattern $\chi$, with the symmetry requirement $c(\overline{\psi} \mid \chi) = c(\overline{\psi} \mid [n] \setminus \chi)$, and attempts to minimize 
	\begin{equation} \label{eq:tot_cost_unrooted}
		\mathrm{Cost}(\overline{\psi} \mid \mathcal{A}^{(k)}) = \sum_{\chi} c(\overline{\psi} \mid \chi) \#(\chi \mid \mathcal{A}^{(k})
	\end{equation}
	For both rooted and unrooted concatenated counting methods, we will also make the following regularity assumptions on the choice of costs, which are met by the all the concatenated concatenated methods we discuss. 
	\begin{description}[style=unboxed,leftmargin=0cm]
		\item[Parsimony uninformative sites are ignored] A site with site pattern $\chi$ is said to be parsimony informative (or simply informative) if $2 \leq |\chi| \leq n-1$ in the rooted case, and if $2 \leq |\chi| \leq n-2$ in the unrooted case. All other sites/site patterns are considered uninformative and are excluded when calculating the total cost as in \eqref{eq:tot_cost}. See Figure \ref{fig:isp} for an illustration of the distinction between an informative and uninformative site pattern.
		
		\textit{Implication}: Informative sites arise as the result of mutations occurring on internal branches of the rooted/unrooted gene tree. Since no coalescent events on the gene trees $(G_i)_i$ occur on external branches of the species tree, the lengths of these external branches are irrelevant to the behavior of the estimator. Therefore, we will write the species tree branch lengths $\bm{x}=(x_1, \ldots, x_{n-2})$ to be a vector of $n-2$ internal branch lengths throughout the paper.
		
		\item[Exchangeability of labels of taxa] The cost function should not favor any particular taxa or group of taxa over any other based solely on their (arbitrary) labeling. In particular, if $\pi$ is any permutation of $[n]$, we require that $\cost{\pi(\psi)}{\chi(\pi)} = \cost{\psi}{\chi}$, where $\pi(\psi) \in \T_n$ is the topology obtained by permuting the labels of the tips of $\psi$ by $\pi$, and $\pi(\chi) \subseteq [n]$ is the site pattern obtained by applying $\pi$ to each of the elements of $\chi$.  
		
		\textit{Implication}: The behavior of a concatenated counting method does not depend on the choice of labeling. Therefore, in any exploration of species tree space, it suffices to examine one representative of each possible unlabeled $n$-taxa topology, instead of examining all possible labeled $n$-taxa topologies.
		
	\end{description}

	\begin{figure}[h]
		\centering
		\includegraphics[width=12cm]{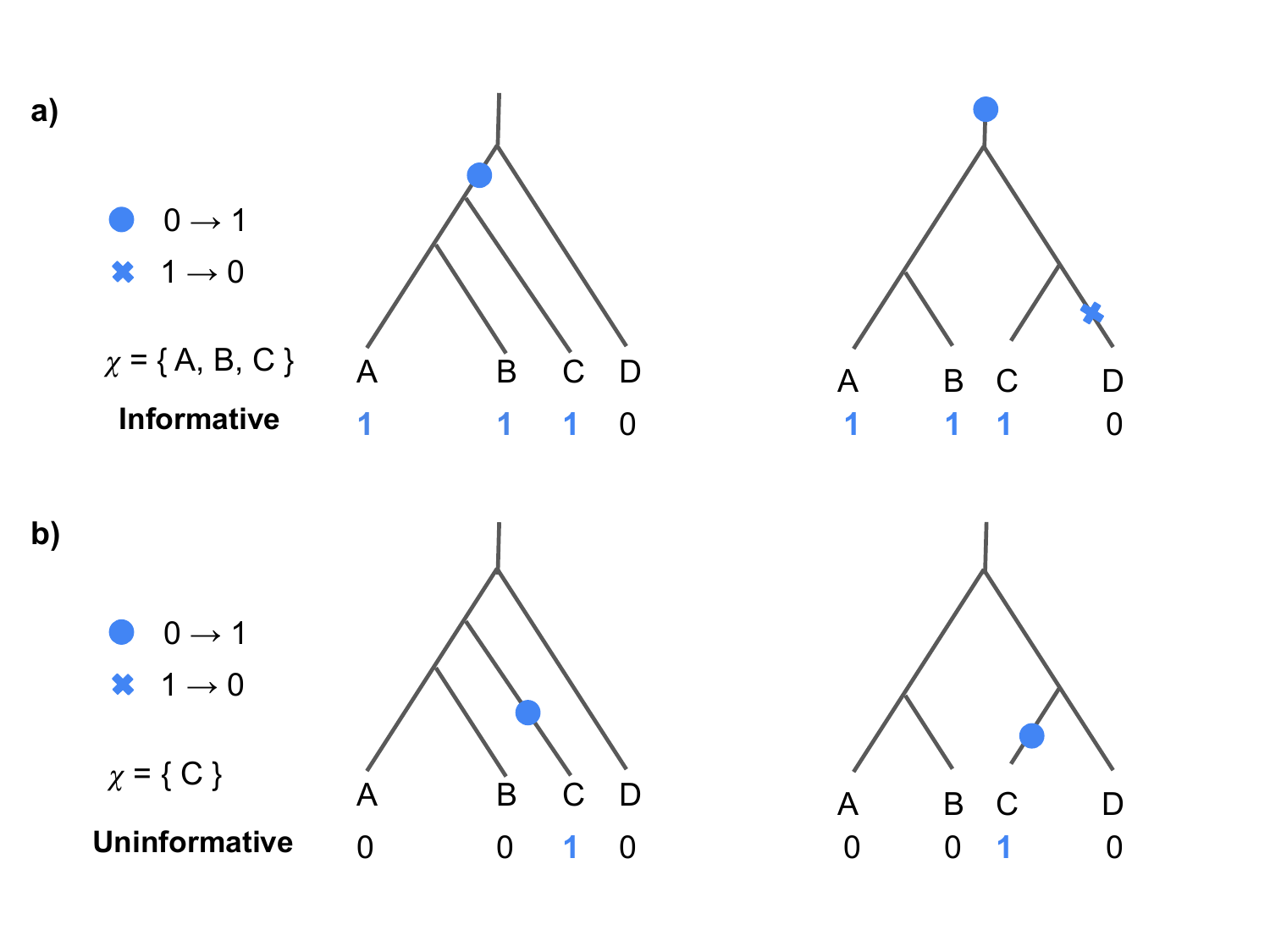}
		\caption{a) An informative site pattern (ISP), $\chi=\{A,B,C\}$ resolved on two different candidate tree topologies requiring different numbers of mutations; b) an uninformative site pattern, $\chi=\{C\}$ resolved on two different topologies requiring only one mutation in both cases.}\label{fig:isp}
	\end{figure}
	
	The most well-known group of concatenated counting methods are parsimony methods, such as Camin-Sokal parsimony \citep{camin1965method}, Wagner (unordered) parsimony \citep{kluge1969quantitative, farris1970methods}, and Dollo parsimony \citep{le1974uniquely, le1977uniquely, farris1977phylogenetic}. See \cite{Felsenstein1983} for an overview of these methods. Each of these methods assigns costs $\cost{\psi}{\chi}$ by counting the minimal number of mutations needed to explain a site pattern $\chi$ on the topology $\psi$, with variation among the methods due to different assumptions on the allowed transitions from the ancestral allelic state $(0)$ to the derived allelic state $(1)$. Among these parsimony methods, we will focus our attention on Wagner parsimony (henceforth just referred to as parsimony), so we give a quick definition here:
	
	\begin{description}[style=unboxed,leftmargin=0cm]
		\item[Rooted parsimony] The cost $\cost{\psi}{\chi}$ is the minimum total number of $0 \to 1$ and $1 \to 0$ mutations needed to resolve $\chi$ on the topology $\psi$, assuming an ancestral state of $0$. By convention, a $0 \to 1$ mutation may be placed above the root of the topology. See Figure \ref{fig:isp}a for an example.  
		
		\item[Unrooted parsimony] The cost $\cost{\overline{\psi}}{\chi}$ is the minimum total number of $0 \to 1$ and $1 \to 0$ mutations needed to resolve $\chi$ on a rooted representative $\psi$ of $\overline{\psi}$, allowing the ancestral state to be $0$ or $1$.  
	\end{description}
	Parsimony methods are not the only concatenated counting methods, with recent work highlighting quartet-based approaches, both in the case of an infinite-sites model of mutation (e.g. SDPQuartets / ASTRAL-BP \citep{springer2020ils, molloy2022theoretical}) and more general mutation models (e.g. CASTER \citep{zhang2025caster}). Both SDPQuartets and ASTRAL-BP are closely related to parsimony -- using the fact that parsimony is statistically consistent for the unrooted 4-taxa case under a MSC + infinite-sites model of mutation \citep{MendesHahn2018, molloy2022theoretical} -- though the exact implementation used to search for the optimal candidate tree topology varies between these methods. CASTER is also closely related to parsimony, but involves a negative weighting of some parsimony-uninformative sites to compensate for the possibility of multiple mutations at a single site. Since our focus is on an infinite-sites model of mutation, we give a brief framing of the methods SDPQuartets / ASTRAL-BP as concatenated counting methods. 
	
	For any four distinct elements $a,b,c,d$ of the label set $[n]$, we say that unrooted topology $\overline{\psi} \in \overline{\T}_n$ displays the quartet $ab|cd$ if the restriction of $\psi$ to taxa $a,b,c,d$ has unrooted topology $((ab)(cd))$, i.e. taxa $a, b$ are sisters in $\overline{\psi}$ as are taxa $c,d$. Similarly, we say that a site pattern $\chi$ supports the quartet $ab|cd$ if $a,b \in \chi$ but $c,d \notin \chi$ or vice versa. 
	\begin{description}[style=unboxed,leftmargin=0cm]
		\item[SDPQuartets/ASTRAL-BP] The cost $\cost{\overline{\psi}}{\chi}$ is taken to be $-\gencost{q}{\overline{\psi}}{\chi}$, where $\gencost{q}{\overline{\psi}}{\chi}$ is the number of the $\binom{n}{4}$ unrooted quartets $ab|cd$ implied by $\psi$ that the site pattern $\chi$ supports. We call $\gencost{q}{\psi}{\chi}$ the quartet score. 
	\end{description}
	
	To examine the statistical consistency of concatenated counting methods as the number of sampled loci included in the alignment grows large, we note that the expected number of sites with site pattern $\chi$ in the alignment $A_i$ is proportional to the expected length of the  branch in $G_i$ which subtends exactly the lineages in $\chi$. Hence, the expected contribution of the alignment $A_i$ to the total cost in \eqref{eq:tot_cost} is proportional to 
	\begin{equation}
		C(\psi \mid \sigma) := \sum_{\chi} \cost{\psi}{\chi} \len{\chi}{\sigma}
	\end{equation}
	where $\len{\chi}{\sigma}$ is the expected branch length of a branch subtending exactly the genes in $\chi$ for a gene tree generated under the MSC on $\sigma$. One may analogously define the expected cost per locus $C(\overline{\psi} \mid \sigma)$ in the unrooted case. Therefore, the strong law of large numbers gives the following criteria for consistency/inconsistency of the given concatenated counting estimator under the species tree $\sigma$ (with an analogous criterion holding in the unrooted case): 
	\begin{description}
		\centering
		\item[(Consistency)] $C(\psi_* \mid \sigma) < C(\psi \mid \sigma)$ for all $\psi \neq \psi_* \in \T_n$.
		\item[(Inconsistency)] $C(\psi_* \mid \sigma) \geq C(\psi \mid \sigma)$ for some $\psi \neq \psi_* \in \T_n$.
	\end{description}
	The main objectives of this paper are to present a novel method of computing the quantity $\len{\chi}{\sigma}$, and analyze the statistical consistency of concatenated counting methods (in particular parsimony) as follows:
	\begin{enumerate}
		\item In Section 3.1, we demonstrate a new, simple manner of computing expected branch lengths $\len{\chi}{\sigma}$ in the gene tree under the MSC.
		\item In Section 3.2, we discuss a decomposition of expected branch lengths  $\len{\chi}{\sigma}$ and related quantities, which is useful both conceptually and computationally.    
		\item In Section 3.3, we provide sample calculations of finding $\len{\chi}{\sigma}$ in the 4-taxa case, and show that these results agree with existing work;  
		\item In Section 4.1, we prove the statistical consistency of parsimony under a MSC + infinite-sites model of evolution in the previously analyzed unrooted 5-taxa case;
		\item In Sections 4.2 and 4.3, we show that parsimony is statistically inconsistent under a MSC + infinite-sites model of evolution in the rooted 5-taxa and unrooted 6-taxa cases, and characterize the exact regions of statistical consistency. 
	\end{enumerate}
	
	\section{Expected branch lengths under the MSC}
	In Section \ref{S:def_result}, we saw that establishing the statistical (in)consistency of concatenated counting methods requires calculating the expected branch lengths of gene trees. In this section, we demonstrate how existing work on the expected height of gene trees under the MSC, in particular that of \cite{efromovich2008coalescent}, may be used to compute the expected branch lengths in gene trees under the MSC. The idea of using the the expected height of the MRCA of sampled tips (usually, by sampling multiple tips per taxa) has been used more directly in species tree inference by methods such as GLASS \citep{mossel2008incomplete}, iGLASS \citep{jewett2012iglass}, STEAC \citep{liu2009estimating}, and MAC \citep{helmkamp2012improvements}; however, here we will use the expected heights as a means to an end. We note that some calculations regarding expected branch lengths have already been done in the 3- and 4-taxa cases, though the approach taken by existing methods makes extensions to $5+$-taxa difficult. For instance, \cite{MendesHahn2018} did so by computing the expected branch length conditional on each possible gene tree history, a method that quickly becomes infeasible for large $n$ since there are $(2n-3)!! = (2n-3) \times (2n-1) \times \cdots \times 1$ possible gene tree topologies and even more possible gene tree histories. Alternatively, a diffusion approximation has been used \citep{doronina2017speciation}, but this approach requires similar amounts of work.   Our new approach involves a relatively simple combinatoric trick that minimizes individual calculations and enables easier analysis.

	\subsection{Gene tree lengths, subtending lengths, and heights}
	\noindent We first introduce notation for the length of a branch and the extended length of a branch in a (random) gene tree $G$. We will not continue the gene tree above its root (i.e. the MRCA of all sampled lineages), since mutations that occur above the root do not give rise to parsimony informative sites.  
	
	\begin{definition*}[Length of a branch] 
		For $\chi \subseteq [n]$, the random variable $L(\chi)$ is the length of the branch subtending exactly the lineages in $\chi$ in the random gene tree $G$ if such a branch exists; otherwise it is defined to be $0$. 
	\end{definition*}
	
	\begin{definition*}[Extended length of a branch]  
		For $\chi \subseteq [n]$, the random variable $L^+(\chi)$ is the total length of branches that subtend at least the lineages in $\chi$ in the random gene tree $G$:
		\begin{equation} \label{eq:extended_length}
			L^+(\chi) = \sum_{\eta \supseteq \chi} L(\eta).
		\end{equation}
	\end{definition*}
	\noindent Ignoring branch lengths that are zero in the sum $\sum_{\eta \supseteq \chi} L(\eta)$, we see that the extended length $L^+(\chi)$ amounts to the length of the path from the MRCA of the lineages in $\chi$ to the root of the gene tree. In particular,
	\begin{equation*}
		L^+(\chi) = H([n]) - H(\chi)
	\end{equation*}
	where $H([n])$ is the height of the MRCA of all $n$ lineages, and $H(\chi)$ is the height of the MRCA of lineages in $\chi$. An example of the relationships connecting $L(\chi), L^+(\chi)$ and $H(\chi)$ is given in Figure \ref{fig:subtending_length_examples}. By taking expected values,
	\begin{equation} \label{eq:exp_subtending}
		\ell^+(\chi \mid \sigma) \defeq  \E_\sigma[L^+(\chi)] = h([n] \mid \sigma) - h(\chi \mid \sigma),
	\end{equation}
	where $h([n] \mid \sigma)$ is defined to be the expected height of a gene tree with one sample from each taxon, and $h(\chi \mid \sigma)$ is defined to be the expected height of a gene tree with one sample from each of the taxa in $\chi$ only. Both expectations are under the MSC on $\sigma$. In computing $h(\chi \mid \sigma)$, we may restrict $\sigma$ to just the taxa in $\chi$, since we only sample lineages from these taxa. The computation of these heights may be done by a standard dynamic programming method; see for instance \citep[Eq. (6)]{efromovich2008coalescent}.
	\begin{figure}[ht]
		\centering
		\sbox0{\includegraphics{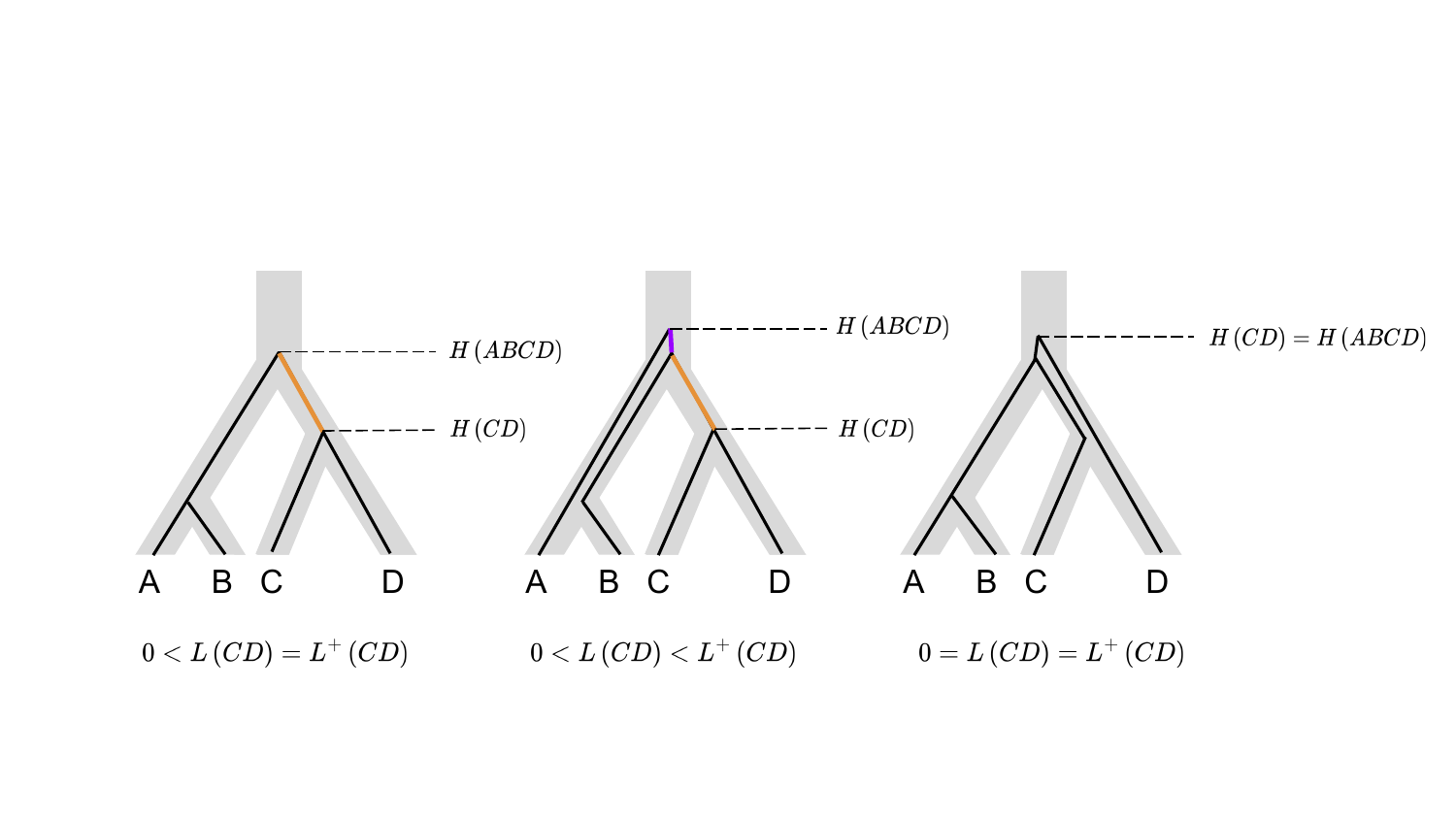}}
		\includegraphics[trim={.03\wd0} {.1\wd0} 0 {.2\wd0}, clip, width=15cm]{Figure3.pdf}
		\caption{Three different realizations of a gene tree $G$ within a species tree with topology $\psi_* = ((AB)(CD))$. Branches of the gene tree that contribute to the length $L(CD)$ and the subtending length $L^+(CD)$ are highlighted in orange, whereas branches of the gene tree that contribute to $L^+(CD)$ but not $L(CD)$ are highlighted in purple. The heights of the overall gene tree $H(ABCD)$ and the height $H(CD)$ of the gene tree restricted to tips $C,D$ are also shown.}\label{fig:subtending_length_examples}
	\end{figure}
	
	To get a feel for how the lengths of branches can be related back to the extended lengths of branches, it can be helpful to make a connection with indicator functions and the inclusion-exclusion principle. Consider the four-taxa case with a  site pattern $\chi = \{a, b\}$, letting $c, d$ denote the other two taxa of the species tree not in $\chi$. Then we can write
	\begin{align*}
		L(ab) &= \sum_{\chi \subseteq \{a,b,c,d\}} 1\{a, b \in \chi, c,d \notin \chi\} L(\chi) \\
		L^+(ab)  &=  \sum_{\chi \subseteq \{a,b,c,d\}} 1\{a, b \in \chi \} L(\chi)
	\end{align*}
	and similarly for other site patterns. We then expand the indicator function $1\{a, b \in \chi, c,d \notin \chi\}$:
	\begin{align*}
		& \, 1\{a, b \in \chi, c, d \notin \chi\} \\
		=& \, 1\{a, b \in \chi\} \cdot 1\{c \notin \chi\} \cdot 1\{d \notin \chi\} \\ 
		=& \, 1\{a, b \in \chi\}\cdot \big(1-1\{c \in \chi\}\big) \cdot \big(1-1\{d \in \chi\}\big) \\ 
		=& \, 1\{a, b \in \chi\} - 1\{a, b, c \in \chi\} -  1\{a, b, d \in \chi\} +  1\{a, b, c, d \in \chi\}
	\end{align*}
	The terms on the last line, after multiplying by $L(\chi)$ and summing over all $\chi \subseteq \{a,b,c,d\}$, correspond exactly to $+L^+(ab)$, $-L^+(abc)$, $-L^+(abc)$, and $+L^+(abcd)$ respectively. (Note that $L^+(abcd)$ is $0$ in this example, since no internal branch of $G$ subtends all four sampled lineages). This gives an idea for how we may 'invert' the relationship given in \eqref{eq:extended_length}; also see Figure \ref{fig:ie_trees} for a further visual example of this inversion in the $5$-taxa case. 
	
	\begin{figure}[h]
		\centering
		\includegraphics[width=16cm]{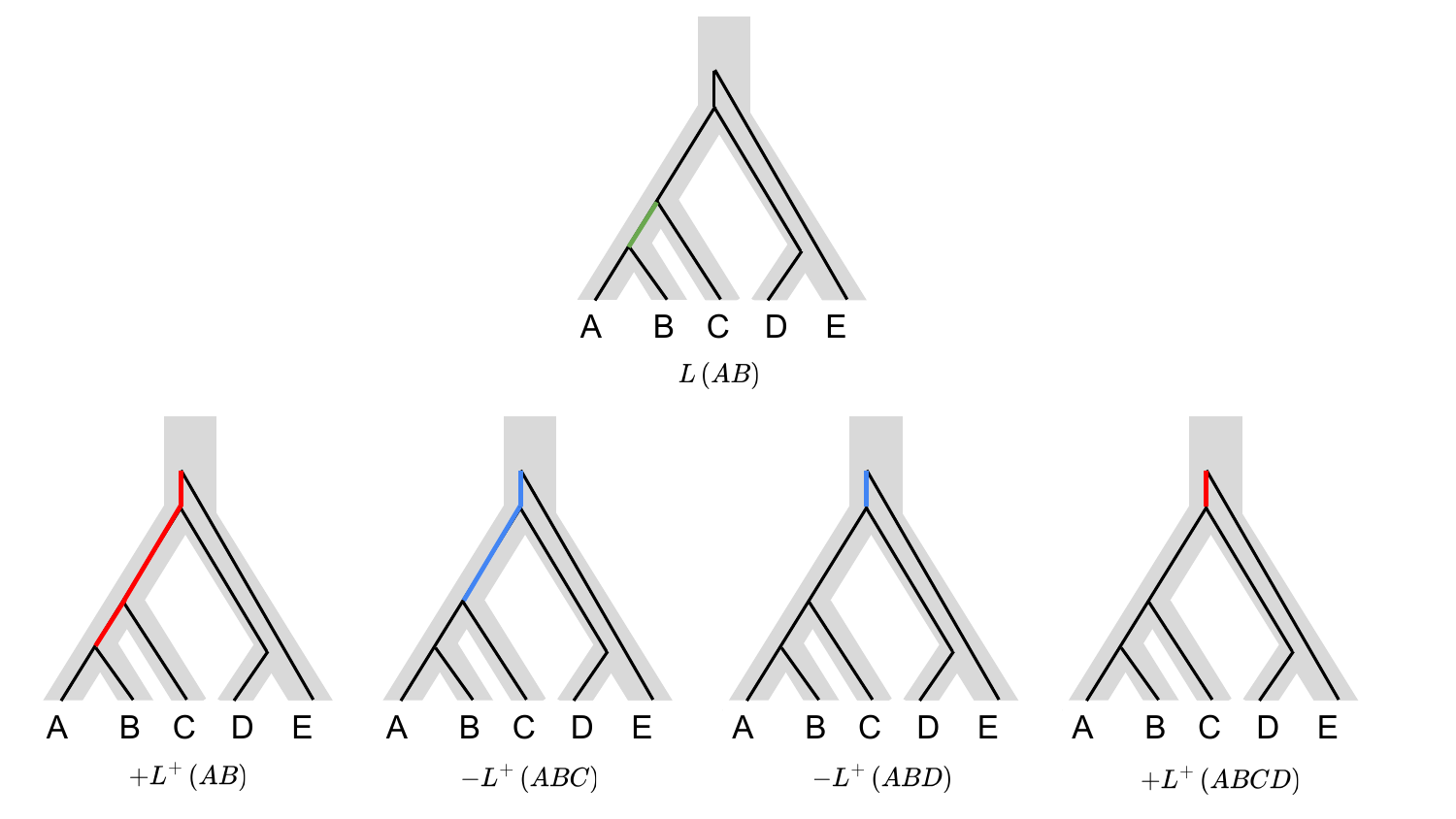}
		\caption{Illustration of recovering the length $L(\chi)$ for $\chi=AB$ from the collection of extended lengths $(L^+(\eta))_{\eta \supset \chi}$ on a gene tree. Terms with an extended length $L^+(\eta) = 0$ for this realization of the gene tree, namely those with $E \in \eta$, are omitted.}\label{fig:ie_trees}
	\end{figure}
	
	\begin{lemma} \label{lemma:inversion} 
		For any fixed realization of the gene tree, 
		\begin{equation*}
			L(\chi) = \sum_{\eta \supseteq \chi} (-1)^{|\eta|-|\chi|} L^+(\eta). 
		\end{equation*}
	\end{lemma}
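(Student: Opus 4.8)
This is Möbius inversion on the Boolean lattice of subsets of $[n]$ (ordered by inclusion), and I would prove it directly rather than by citing the general theory, since the self-contained computation is one line and makes transparent the alternating binomial sum that already appeared in the inclusion–exclusion heuristic preceding the lemma. The plan is to start from the right-hand side, substitute the definition \eqref{eq:extended_length} of the extended length, interchange a finite double sum, and evaluate the resulting inner sum with the binomial theorem.

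Concretely, I would write
\begin{equation*}
	\sum_{\eta \supseteq \chi} (-1)^{|\eta|-|\chi|} L^+(\eta)
	= \sum_{\eta \supseteq \chi} (-1)^{|\eta|-|\chi|} \sum_{\zeta \supseteq \eta} L(\zeta),
\end{equation*}
where all sums range over subsets of $[n]$ and are therefore finite, so the order of summation may be exchanged freely. Collecting the coefficient of each $L(\zeta)$ (which only occurs for $\zeta \supseteq \chi$) gives
\begin{equation*}
	\sum_{\zeta \supseteq \chi} L(\zeta) \Bigg( \sum_{\eta\,:\,\chi \subseteq \eta \subseteq \zeta} (-1)^{|\eta|-|\chi|} \Bigg).
\end{equation*}

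To finish, I would evaluate the inner sum: writing $m = |\zeta|-|\chi|$, the sets $\eta$ with $\chi \subseteq \eta \subseteq \zeta$ and $|\eta|-|\chi| = j$ are in bijection with the $j$-element subsets of $\zeta \setminus \chi$, so there are $\binom{m}{j}$ of them, and the inner sum is $\sum_{j=0}^{m} \binom{m}{j}(-1)^j = (1-1)^m$. This is $1$ when $m=0$, i.e. $\zeta = \chi$, and $0$ otherwise, so only the $\zeta = \chi$ term survives and the whole expression collapses to $L(\chi)$, as claimed.

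I do not expect a genuine obstacle here: the only points needing care are that the double sum is finite (hence the interchange is legitimate) and the elementary count of intermediate sets $\eta$ giving the binomial coefficient; the core identity is just $\sum_{j}\binom{m}{j}(-1)^j = 0^m$. If one preferred brevity over transparency, the statement also follows immediately from the classical Möbius inversion formula on the subset lattice, whose Möbius function is $\mu(\eta,\zeta) = (-1)^{|\zeta|-|\eta|}$, but I would keep the direct computation to mirror the inclusion–exclusion discussion already in the text.
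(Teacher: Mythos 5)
Your proposal is correct and follows essentially the same route as the paper's proof: both substitute the definition of $L^+$, swap the order of summation, and kill the inner alternating sum via the binomial theorem $(1+(-1))^{|\zeta|-|\chi|}$, leaving only the $\zeta=\chi$ term. No substantive difference.
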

	
	\begin{proof}
		The claim follows from a reversal of the order of summation:
		\begin{align*}
			\sum_{\eta \supseteq \chi} (-1)^{|\eta|-|\chi|} L^+(\eta) &=  \sum_{\eta \supseteq \chi} (-1)^{|\eta|-|\chi|} \sum_{\eta' \supseteq \eta} L(\eta') \\
			&= \sum_{\eta' \supseteq \chi} L(\eta') \sum_{\chi \subseteq \eta \subseteq \eta'}  (-1)^{|\eta|-|\chi|}  \\ 
			&=  \sum_{\eta' \supseteq \chi} L(\eta') 1_{\{\eta' = \chi\}} \\
			&= L(\chi)
		\end{align*}
		Note the third line follows since whenever $\eta'$ strictly contains $\chi$, there are an equal number of sets $\eta$ with even parity and odd parity lying between $\chi$ and $\eta'$ (which can be seen by applying the binomial theorem to $(1 + (-1))^{|\eta'|-|\chi|}$), such that the sum $\sum_{\chi \subseteq \eta \subseteq \eta'}  (-1)^{|\eta|-|\chi|}$ vanishes.
	\end{proof}
	\noindent By taking expected values in Lemma \ref{lemma:inversion}, we obtain a practical method of computing $\len{\chi}{\sigma}$ in general:
	\begin{equation} \label{eq:inversion}
		\len{\chi}{\sigma} = \sum_{\eta \supseteq \chi} (-1)^{|\eta|-|\chi|}\lensub{\chi}{\sigma}
	\end{equation}
	or, alternatively, after applying \eqref{eq:exp_subtending} and canceling terms $h([n] \mid \sigma)$, 
	\begin{equation} \label{eq:inversion2}
		\len{\chi}{\sigma} = -\sum_{\eta \supseteq \chi} (-1)^{|\eta|-|\chi|} h(\chi \mid \sigma)
	\end{equation}
	
	The generality of the above ideas connecting the expected length $\len{\chi}{\sigma}$ to the expected heights of gene trees should be noted. For instance, if we extend the algorithm of \cite{efromovich2008coalescent} to calculate expected gene tree heights for a given phylogenetic network (i.e. in models involving introgression), we can still use \eqref{eq:inversion} to analyze the regions of inconsistency for a given concatenated counting method (cf. Hibbins and Hahn 2024). It could also be applied more broadly to find expected branch lengths in random coalescent trees within a single population, even when the coalescent tree admits multiple mergers, such as in the $\Lambda$-coalescent \citep{pitman1999coalescents} or the $\Xi$-coalescent \citep{schweinsberg2001coalescents}, and this idea has been explored in \cite{spence2016site}, though in an indirect manner that did not fully explore the underlying connection to the inclusion-exclusion principle. 
	
	It is also possible that a similar combinatoric approach could be applied to find the probability that a randomly sampled gene tree has a particular branch (or collection of branches), rather than to find the expected lengths of branches. This may allow alternative derivations of results on monophyly and paraphyly as examined in several previous works \citep{rosenberg2003shapes, mehta2016probability, mehta2019probability, mehta2022probability}.

	\subsection{Decomposition into species and coalescent terms} \label{SS:st}
	\noindent In performing analysis with the expected branch lengths, it can be helpful to first decompose the expected heights $h(\chi \mid \sigma)$ as the height $h_{\rm sp}(\chi \mid \sigma)$ of the species MRCA of the taxa in $\chi$ in the species tree $\sigma$, plus the expected height $h_{\rm coa}(\chi \mid \sigma)$ of the MRCA of the lineages in $\chi$ above the species MRCA of the taxa in $\chi$. We will call these terms the 'species' and 'coalescence' terms respectively; \cite{efromovich2008coalescent} refers to the coalescence term as the 'species-gene coalescent time'.  
	\begin{equation*}
		h(\chi \mid \sigma) = h_{\rm sp}(\chi \mid \sigma) + h_{\rm coa}(\chi \mid \sigma)
	\end{equation*}
	The species term $ h_{\rm sp}(\chi \mid \sigma)$ is the sum of a collection of branch lengths of $\sigma$. Meanwhile the coalescent term $h_{\rm coa}(\chi \mid \sigma)$ is a polynomial in the transformed branch lengths $X_i:=\exp(-x_i)$, since the coalescent transition probabilities are a polynomial function of $\exp(-t)$ \citep{Tavare1984}.
	
	By grouping any species and coalescence terms that appear together in an expression, we can similarly define a decomposition for terms such as the expected subtending length $\ell^+(\chi \mid \sigma)$, the expected length $\ell(\chi \mid \sigma)$, and the expected cost per locus $C(\psi \mid \sigma)$. For instance, in the decomposition 
	\begin{equation} \label{eq:len_decomp}
		\len{\chi}{\sigma} = \ell_{\rm sp}(\chi \mid \sigma) + \ell_{\rm coa}(\chi \mid \sigma) 
	\end{equation}
	the species term  $\ell_{\rm sp}(\chi \mid \sigma)$ is nothing but the length of the branch in $\sigma$ that subtends exactly the tips in $\chi$, if such a branch in $\sigma$ exists; otherwise it is $0$. It is also useful to write the decomposition for the expected cost per locus:
	\begin{align} \label{eq:cost_decomp}
		C(\psi \mid \sigma) = \underset{C_{\rm sp} (\psi \mid \sigma)}{\underbrace{\sum_{\chi} \cost{\psi}{\chi} \ell_{\rm sp}(\chi \mid \sigma)}} + \underset{ C_{\rm coa} (\psi \mid \sigma) }{\underbrace{\sum_{\chi} \cost{\psi}{\chi} \ell_{\rm coa}(\chi \mid \sigma)}} 
	\end{align}
	We think of $C_{\rm sp} (\psi \mid \sigma)$ as the expected cost per locus in the absence of gene tree heterogeneity. On the other hand, $C_{\rm coa} (\psi \mid \sigma)$ represents an additional cost that arises due to ILS and gene tree discordance. 
	
	$C_{\rm coa} (\psi \mid \sigma)$ is easy to compute in the limit where $\sigma$ is a star tree (i.e. all internal branch lengths of $\sigma$ are $0$). We denote a star tree by $\star$. Note once again that the external branch lengths of the star tree are irrelevant, since no coalescent events occur along these branches. For a star tree, the coalescence term in fact comprises the entirety of the expected cost. Under our modeling assumptions, the coalescent process in the singular ancestral population (where all internal branches of the gene trees arise) is that of the Kingman coalescent, and so the expected length of internal branches of the gene tree subtending exactly $2 \leq i \leq n-1$ tips is $2/i$ \citep{Fu1995}. We then have
	\begin{align*}
		\ell_{\rm coa}(\chi \mid \star) &= \frac{2}{|\chi|} \binom{n}{|\chi|}^{-1} \\ 
		C_{\rm coa} (\psi \mid \star) &= \mathlarger{\sum}_{\chi} \, \cost{\psi}{\chi} \, \frac{2}{|\chi|}\binom{n}{|\chi|}^{-1} 
	\end{align*}
	where the expression for  $\ell_{\rm coa}(\chi \mid \star)$  follows by symmetry, as there are $\binom{n}{|\chi|}$ possible branches subtending exactly $|\chi|$ tips. This provides a simple criterion to show inconsistency: if there exist two topologies $\psi, \psi' \in \T_n$ such that 
	\begin{align*}
		C(\psi \mid \star) < C(\psi' \mid \star) 
	\end{align*}
	then, owing to the continuity of $C(\psi \mid \sigma)$ as a function of $\bm{x}$, the concatenated counting estimator will always prefer $\psi$ over $\psi'$ for sufficiently short branch lengths $\bm{x}$, even if the true species topology is $\psi_*=\psi'$. 
	
	\subsection{A detailed analysis for the 4-taxa case}
	
	\noindent To demonstrate the procedure of finding $\ell(\chi \mid \sigma)$ in general, we work out two examples in the 4-taxa case. For these examples, we will take $\sigma$ to have the asymmetric topology $\psi_* = (((AB)C)D)$ and internal branch lengths $x_1, x_2$ subtending taxa $A,B$ and taxa $A,B,C$, respectively. Our goal will be to find the expected branch lengths $\ell(AB \mid \sigma)$ (Example \ref{ex:len_ab}) and $\ell(CD \mid \sigma)$ (Example \ref{ex:len_cd}). We will see that we can reuse much of the work from Example \ref{ex:len_ab} in Example \ref{ex:len_cd}, so we will provide most of the detailed exposition of the procedure in Example \ref{ex:len_ab}. We will then finish by observing that the sum $\ell(AB \mid \sigma) + \ell(CD \mid \sigma)$ has a particularly nice form, and use it to motivate a useful lemma (Lemma \ref{lemma:quartet_isps}), originally shown by \cite{molloy2022theoretical}.    
	\begin{example}\label{ex:len_ab}
		\rm 
		To compute $\ell(AB \mid \sigma)$, we start with the decomposition into species and coalescence terms as in \eqref{eq:len_decomp}: we know that $\ell_{\rm sp}(AB \mid \sigma) = x_1$, because the length of the internal branch of the species tree that subtends exactly $AB$ is $x_1$. Meanwhile, we can use \eqref{eq:inversion2} to find that
		\begin{align*}
			\ell_{\rm coa}(AB \mid \sigma) = - h_{\rm coa}(AB \mid \sigma) + h_{\rm coa}(ABC \mid \sigma) + h_{\rm coa}(ABD \mid \sigma) - h_{\rm coa}(ABCD \mid \sigma)
		\end{align*}
		\begin{figure}[h]
			\centering
			\includegraphics[width=15cm]{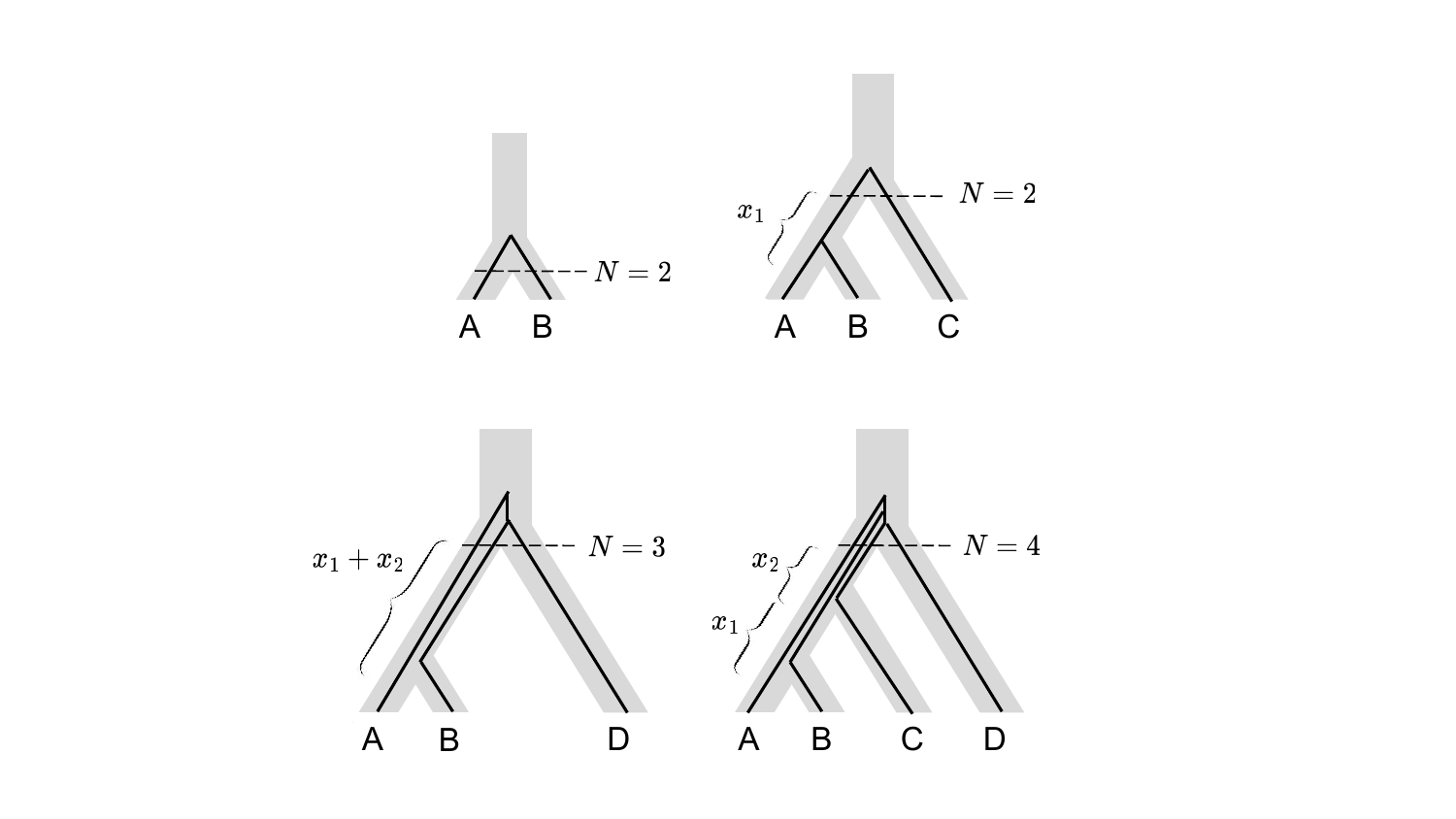}
			\caption{The restriction of a single species tree $\sigma$ with asymmetric topology $(((AB)C)D)$ to taxa $AB$, $ABC$, $ABD$, and $ABCD$. Example realizations of a gene tree for each restricted species tree and the random variable $N$ (the number of lineages entering the root of the restricted species tree) are shown  }\label{fig:rest_trees}
		\end{figure}
		\noindent Our next step will be to construct four sub-species trees of $\sigma$, obtained by  restricting $\sigma$ to the taxa $AB$, $ABC$, $ABD$ and $ABCD$, respectively. We will denote the species tree obtained by restricting to the taxa in $\chi$ by $\sigma_{\mid \chi}$. For each of these restricted trees, we must find the distribution of $N$, the number of lineages of the gene tree that enter the root of the restricted species tree. See Figure \ref{fig:rest_trees} for an illustration. To do so, we use the coalescent transition probability functions $g_{ij}(t)$ from \citep[Equation 6.1]{Tavare1984}. In this example, we will only need the specific formulae for $i=2,3$, listed below for reference.
		\begin{align*}
			g_{21}(t) &= 1-\exp(-t), \quad g_{22}(t) = \exp(-t) \\
			g_{31}(t) &= 1-\frac{3}{2}\exp(-t)+\frac{1}{2}\exp(-3t), \quad g_{32}(t) = \frac{3}{2}\exp(-t)-\frac{3}{2}\exp(-3t), \quad g_{33}(t) = \exp(-3t) 
		\end{align*}
		Once the distribution of $N$ is found, we may apply \cite[Equation 3]{efromovich2008coalescent} to find the expected height above the root of the restricted tree:
		\begin{equation*}
			h_{\rm coa}(\chi \mid \sigma) = 2-\sum_{i=2}^{|\chi|}\frac{2}{i} \, \P(N=i \mid \sigma_{\mid \chi})
		\end{equation*}
		We now carry out the calculation for each of the restricted species trees:
		\begin{itemize}
			\item For restricted species tree $\sigma_{\mid AB}$: Clearly,  $\P(N=1 \mid \sigma_{\mid AB})=1$, so $h_{\rm coa}(AB \mid \sigma)=1$.
			\item For restricted species tree $\sigma_{\mid ABC}$: $\P(N=2 \mid \sigma_{\mid ABC})=g_{21}(x_1) = 1-\exp(-x_1)$ (i.e. if lineages $A,B$ coalesce in the ancestral population of species $A,B$) and $P(N=3 \mid \sigma_{\mid ABC})=g_{22}(x_1)=\exp(-x_1)$. Thus, $h_{\rm coa}(ABC \mid \sigma)=2-\frac{2}{2}(1-\exp(-x_1)) - \frac{2}{3}\exp(-x_1) = 1+\frac{1}{3}\exp(-x_1)$.
			\item For restricted tree $\sigma_{\mid ABD}$: the same logic as above (with the only difference that $\sigma_{\mid ABD}$ has a single internal branch of length $x_1+x_2$) shows that $h_{\rm coa}(ABD \mid \sigma)=1+\frac{1}{3}\exp(-x_1-x_2)$.
			\item For restricted species tree $\sigma_{\mid ABCD}$: we first must consider the number of lineages entering and exiting the ancestral population of taxa $A,B,C$. Two lineages of the gene tree enter ancestral population of taxa $A,B,C$ with probability $g_{21}(x_1)=1-\exp(-x_1)$ and three lineages of the gene tree enter the ancestral population with probability $g_{22}(x_1)=\exp(-x_1)$. Therefore, by considering the number of coalescence events that occur, we have either 1, 2 or 3 lineages of the gene tree at the top of the ancestral population of taxa A,B,C. This leads rise to 2, 3, or 4 lineages that enter the root population when also considering lineage D:
			\begin{align*}
				\P(N=2 \mid \sigma) &= g_{31}(x_2) g_{22}(x_1) + g_{21}(x_2) g_{21}(x_1) = 1 - \exp(-x_2) - \frac{1}{2}\exp(-x_1-x_2) + \frac{1}{2}\exp(-x_1-3x_2) \\ 
				\P(N=3 \mid \sigma) &= g_{32}(x_2) g_{22}(x_1) + g_{22}(x_2) g_{21}(x_1) = \exp(-x_2) + \frac{1}{2}\exp(-x_1-x_2) - \frac{3}{2}\exp(-x_1-3x_2)\\
				\P(N=4 \mid \sigma) &= g_{33}(x_2) g_{22}(x_1) = \exp(-x_1-3x_2)
			\end{align*}
			Putting this together gives $h_{\rm coa}(ABCD \mid \sigma) = 1+\frac{1}{3}\exp(-x_2)+\frac{1}{6}\exp(-x_1-x_2)$.
		\end{itemize}
		With these expected heights, we find that 
		\begin{align*}
			\ell_{\rm coa}(AB \mid \sigma) &= - 1 + \big[1+\frac{1}{3}\exp(-x_1)\big] +\big[1+\frac{1}{3}\exp(-x_1-x_2)\big] - \big[1+\frac{1}{3}\exp(-x_2)+\frac{1}{6}\exp(-x_1-x_2)\big]\\
			&=\frac{1}{3}\exp(-x_1) - \frac{1}{3}\exp(-x_2) + \frac{1}{6}\exp(-x_1-x_2)
		\end{align*}
		and adding back the species term gives 
		\begin{equation*}
			\ell(AB \mid \sigma) = x_1 + \frac{1}{3}\exp(-x_1) - \frac{1}{3}\exp(-x_2) + \frac{1}{6}\exp(-x_1-x_2)
		\end{equation*}
		or, using the transformed variables $(X_1, X_2) = (\exp(-x_1), \exp(-x_2))$ as shorthand, 
		\begin{equation*}
			\ell(AB \mid \sigma) = x_1 + \frac{1}{3}(X_1-X_2) + \frac{1}{6}X_1X_2
		\end{equation*}
		
	\end{example}
	
	\begin{example}\label{ex:len_cd}
		\rm
		To compute $\ell(CD \mid \sigma)$, we again decompose into species and coalescence terms, noting $\ell_{\rm sp}(CD \mid \sigma) = 0$ since no branch of the species tree subtends exactly the taxa $C,D$. Therefore, the entire expected length $\ell(CD \mid \sigma)$ consists of the coalescence term:   
		\begin{align*}
			\ell(CD \mid \sigma) = - h_{\rm coa}(CD \mid \sigma) + h_{\rm coa}(ACD \mid \sigma) + h_{\rm coa}(BCD \mid \sigma) - h_{\rm coa}(ABCD \mid \sigma)
		\end{align*}
		Here, we can reuse many of the calculations from Example 1, both in a direct and indirect manner. For instance, we may reuse our calculation for $h_{\rm coa}(ABCD \mid \sigma)$ directly, while applying the same logic as in our calculations for $h_{\rm coa}(ABC \mid \sigma)$ to determine $h_{\rm coa}(ACD \mid \sigma)$ and $h_{\rm coa}(BCD \mid \sigma)$. In particular, $\sigma_{\mid ABC}$ is a 3-taxa caterpillar tree with the singular internal branch length $x_1$, whereas  $\sigma_{\mid ACD}$ and $\sigma_{\mid BCD}$ are both 3-taxa caterpillar trees with internal branch length $x_2$. So $h_{\rm coa}(ACD \mid \sigma)$ can be found by substituting $x_2$ in for $x_1$ in the expression for $h_{\rm coa}(ABC \mid \sigma)$. Putting these observations together, we get 
		\begin{align*}
			\ell(CD\mid \sigma) &= - 1 + \big[1+\frac{1}{3}\exp(-x_2)\big] +\big[1+\frac{1}{3}\exp(-x_2)\big] - \big[1+\frac{1}{3}\exp(-x_2)+\frac{1}{6}\exp(-x_1-x_2)\big]\\
			&=\frac{1}{3}\exp(-x_2)- \frac{1}{6}\exp(-x_1-x_2)
		\end{align*}
		or, once again using the transformed variables $(X_1, X_2) = (\exp(-x_1), \exp(-x_2))$ as shorthand, 
		\begin{equation*}
			\ell(CD \mid \sigma) = \frac{1}{3}X_2 - \frac{1}{6}X_1X_2
		\end{equation*}
		
	\end{example}
	
	Using the results of Examples 1 and 2, we can make the observation that $\ell(AB \mid \sigma) + \ell(CD \mid \sigma)$, which is proportional to the expected number of informative site patterns that support the unrooted quartet $AB|CD$, is in fact independent of $x_2$:
	\begin{equation} \label{eq:quartet_isps}
		\ell(AB \mid \sigma) + \ell(CD \mid \sigma) = x_1 + \frac{1}{3}X_1
	\end{equation}
	This result can be thought of in terms of the unrooted species tree $\overline{\sigma}$ obtained by unrooting $\sigma$. Indeed, $\overline{\sigma}$ has a single internal branch of length $x_1$, with the previous internal branch length of $x_2$ in the rooted tree $\sigma$ being collapsed into an external branch of $\overline{\sigma}$.  Moreover, we can use this result to quickly compute the expected number of  number of informative site patterns that support the alternative quartets $AC|BD$ or $AD|BC$. To see how, we note that for any permutation $a,b,c,d$ of $A,B,C,D$, we have 
	\begin{align*}
		\ell(ab \mid \sigma)+\ell(cd \mid \sigma) = -&h(ab \mid \sigma) +h(abc \mid \sigma) +h(abd \mid \sigma) -h(abcd \mid \sigma) \\
		-&h(cd \mid \sigma) +h(acd \mid \sigma) +h(bcd \mid \sigma) -h(abcd \mid \sigma)
	\end{align*}
	We observe that regardless of the choice of the permutation $a,b,c,d$ of $A,B,C,D$, the terms $+h(ABC)$, $+h(ABD)$, $+h(ACD)$, $+h(BC)$, and $-h(ABCD)$ (twice) will appear in this expansion in some order. That is, 
	\begin{equation}\label{eq:quartet_support}
		\ell(ab \mid \sigma)+\ell(cd \mid \sigma) = -h(ab \mid \sigma) -h(cd \mid \sigma) + \textrm{const}.
	\end{equation}
	where the constant is the same regardless of the choice of permutation $a,b,c,d$ of $A,B,C,D$, depending only on $\sigma$. Further, since the coalescent terms $h_{\rm coa}(ab \mid \sigma)=1$ regardless of the choice of $a,b$, we have that 
	\begin{equation*}
		\ell_{\rm coa}(ab \mid \sigma)+\ell_{\rm coa}(cd \mid \sigma) = \textrm{const.}
	\end{equation*}
	and we can read off this constant as $\frac{1}{3}X_1$ from \eqref{eq:quartet_isps}. Meanwhile the species terms are also $0$ for the alternative quartets $AC|BD$ and $AD|BD$: 
	\begin{align*}
		\ell_{\rm sp}(AC \mid \sigma)+\ell_{\rm sp}(BD \mid \sigma) = \ell_{\rm sp}(AD \mid \sigma)+\ell_{\rm sp}(BC \mid \sigma) = 0
	\end{align*}
	because no internal branches of $\sigma$ subtend exactly $AC$, $BD$, $AD$, or $BC$. Therefore, we conclude
	\begin{itemize}
		\item $\len{AB}{\sigma} + \len{CD}{\sigma} = x_1 + \frac{1}{3}X_1$; 
		\item $\len{AC}{\sigma} + \len{BD}{\sigma} = \len{AD}{\sigma} + \len{BC}{\sigma} =\frac{1}{3}X_1$ 
	\end{itemize}
	While we have only examined the case where $\sigma$ has an asymmetric topology, it turns out this result can be extended to the case where $\sigma$ has the symmetric topology $\psi_* = ((ab)(cd))$ as well, as in the following lemma, previously stated and proven in \cite{molloy2022theoretical} (though we rewrite it in our notation):
	\begin{lemma}\label{lemma:quartet_isps}
		Suppose $\sigma$ is a 4-taxa rooted species tree with taxa $A,B,C,D$, and let $a,b,c,d$ be a permutation of $A,B,C,D$. Then  
		\begin{itemize}
			\item $\len{ab}{\sigma} + \len{cd}{\sigma} = \tau + \frac{1}{3}\exp(-\tau)$ if $\overline{\sigma}$ displays the quartet $ab|cd$;
			\item $\len{ab}{\sigma} + \len{cd}{\sigma} = \frac{1}{3}\exp(-\tau)$ if $\overline{\sigma}$ displays does not display the quartet $ab|cd$;
		\end{itemize}
		where $\tau$ is the length of the singular internal branch of the unrooted species tree $\overline{\sigma}$ obtained by unrooting $\sigma$.
	\end{lemma}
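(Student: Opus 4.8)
The plan is to prove Lemma~\ref{lemma:quartet_isps} in two cases according to the topology of $\sigma$, leveraging the computations already carried out for the asymmetric topology and reducing the symmetric case to a direct coalescent calculation. For the asymmetric topology $\psi_* = (((AB)C)D)$, I would simply invoke the two bulleted conclusions stated just before the lemma: if $\overline\sigma$ displays $ab|cd$, then $\{a,b\}$ is (up to relabeling) the cherry $\{A,B\}$ subtended by the internal branch of length $x_1 = \tau$, and equation \eqref{eq:quartet_isps} gives $\len{ab}{\sigma}+\len{cd}{\sigma} = x_1 + \frac13 X_1 = \tau + \frac13\exp(-\tau)$; if $\overline\sigma$ does not display $ab|cd$, the second bullet gives $\frac13 X_1 = \frac13\exp(-\tau)$. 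Here I should note that the unrooted tree $\overline\sigma$ for an asymmetric rooted $4$-taxon tree displays exactly the quartet grouping the unique cherry against the other two taxa, and its sole internal branch length is $x_1$, so $\tau = x_1$ as claimed.

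Next I would handle the symmetric topology $\psi_* = ((AB)(CD))$ with internal branch lengths, say, $x_1$ above the cherry $\{A,B\}$ and $x_2$ above the cherry $\{C,D\}$; by exchangeability of taxon labels (one of the standing regularity assumptions) it suffices to treat one representative labeling. For this tree $\overline\sigma$ displays the quartet $AB|CD$, and its single internal branch has length $\tau = x_1 + x_2$ (the two rooted internal branches concatenate upon unrooting, since the root of $\sigma$ is a degree-two vertex on the path between the two cherries). I would then compute $\len{AB}{\sigma}$ and $\len{CD}{\sigma}$ directly via \eqref{eq:inversion2}, exactly as in Examples~\ref{ex:len_ab} and~\ref{ex:len_cd}: build the restricted species trees $\sigma_{|AB}$, $\sigma_{|ABC}$, $\sigma_{|ABD}$, $\sigma_{|ABCD}$ (and the analogues for $CD$), track the distribution of $N$ = number of lineages entering the root using the transition functions $g_{ij}$ of \cite{Tavare1984}, and apply the height formula $h_{\rm coa}(\chi\mid\sigma) = 2 - \sum_{i=2}^{|\chi|}\frac2i\,\P(N=i\mid\sigma_{|\chi})$ from \cite{efromovich2008coalescent}. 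Adding the species terms $\ell_{\rm sp}(AB\mid\sigma)=x_1$, $\ell_{\rm sp}(CD\mid\sigma)=x_2$, I expect the $x$-dependent coalescent contributions to telescope so that $\len{AB}{\sigma}+\len{CD}{\sigma} = (x_1+x_2) + \frac13\exp(-(x_1+x_2)) = \tau + \frac13\exp(-\tau)$.

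Finally I would obtain the ``does not display'' case for the symmetric topology — i.e. the sums $\len{AC}{\sigma}+\len{BD}{\sigma}$ and $\len{AD}{\sigma}+\len{BC}{\sigma}$ — by the same symmetry-and-telescoping argument used in the asymmetric case: expanding each term via \eqref{eq:inversion2} and collecting, one finds $\len{ab}{\sigma}+\len{cd}{\sigma} = -h(ab\mid\sigma) - h(cd\mid\sigma) + \mathrm{const}$ with the constant independent of which permutation $a,b,c,d$ of $A,B,C,D$ is chosen, since the five positive height terms $h(ABC),h(ABD),h(ACD),h(BCD)$ together with $-2h(ABCD)$ always appear; because $h_{\rm coa}(ab\mid\sigma)=1$ for every pair and the species terms $\ell_{\rm sp}$ vanish for all of $AC,BD,AD,BC$, both alternative sums equal the same constant, which by matching against the already-computed $AB|CD$ value must be $\frac13\exp(-\tau)$.

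The main obstacle is the explicit coalescent bookkeeping for the symmetric topology: computing $\P(N=i\mid\sigma_{|ABCD})$ requires composing the coalescent through \emph{two} ancestral populations that merge at the root (the $\{A,B\}$ population and the $\{C,D\}$ population feed into the root simultaneously), so one must convolve the lineage-count distributions coming up each cherry before applying $g_{ij}$ at the root — a slightly more involved branching than the single-chain caterpillar restrictions of Example~\ref{ex:len_ab}. I expect the algebra to collapse cleanly (the final answer is forced by Lemma~\ref{lemma:inversion} plus the star-tree value $\frac13\exp(0)=\frac13$ in the limit $\tau\to0$), but verifying the telescoping in \eqref{eq:inversion2} without error is where care is needed; alternatively one can shortcut much of this by noting the sum $\len{ab}{\sigma}+\len{cd}{\sigma}$ equals the expected total internal length of the unrooted $4$-taxon gene tree that is "on the side" of the quartet $ab|cd$, which depends on $\overline\sigma$ only through whether $ab|cd$ is displayed and through $\tau$, matching the $3$-taxon-caterpillar-style computation already done.
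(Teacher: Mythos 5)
Your proposal is correct, and it runs on exactly the machinery the paper develops: \eqref{eq:quartet_isps} for the asymmetric shape, and Lemma \ref{lemma:inversion} / \eqref{eq:inversion2} together with the restricted species trees, Tavar\'e's $g_{ij}$, and the Efromovich--Kubatko height formula for the symmetric shape. The one real difference is that the paper does not actually prove the symmetric case: it derives the asymmetric case in Examples \ref{ex:len_ab}--\ref{ex:len_cd} and then cites Molloy et al.\ (2022) for the full statement of Lemma \ref{lemma:quartet_isps}, whereas you carry out the symmetric computation within the paper's own framework, which makes the result self-contained. Your anticipated telescoping does go through: for $\psi_*=((AB)(CD))$ with cherry branches $x_1,x_2$ (so $\tau=x_1+x_2$ after suppressing the degree-two root, as you note), convolving the independent lineage counts coming up the two cherries gives $\P(N=2)=(1-X_1)(1-X_2)$, $\P(N=3)=X_1+X_2-2X_1X_2$, $\P(N=4)=X_1X_2$, hence $h_{\rm coa}(ABCD \mid \sigma)=1+\tfrac{1}{3}X_1+\tfrac{1}{3}X_2-\tfrac{1}{6}X_1X_2$, and combining with $h_{\rm coa}(ABC\mid\sigma)=h_{\rm coa}(ABD\mid\sigma)=1+\tfrac{1}{3}X_1$ (and the mirror expressions for $CD$) yields $\len{AB}{\sigma}=x_1+\tfrac{1}{3}X_1-\tfrac{1}{3}X_2+\tfrac{1}{6}X_1X_2$ and $\len{CD}{\sigma}=x_2+\tfrac{1}{3}X_2-\tfrac{1}{3}X_1+\tfrac{1}{6}X_1X_2$, so the sum is $x_1+x_2+\tfrac{1}{3}X_1X_2=\tau+\tfrac{1}{3}\exp(-\tau)$; your constant-term argument (all pairs have $h_{\rm coa}=1$ and the species terms vanish for $AC,BD,AD,BC$) then forces $\tfrac{1}{3}\exp(-\tau)$ for the two non-displayed pairings. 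Two minor remarks: the ``exchangeability'' you invoke for choosing a representative labeling is really just the label-symmetry of the MSC itself rather than the cost-function assumption (the lemma makes no reference to a cost), and your reading $+h(ABC),+h(ABD),+h(ACD),+h(BCD),-2h(ABCD)$ of the constant is the correct one (the paper's ``$+h(BC)$'' is a typo for $+h(BCD)$).
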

	In words, this lemma tells us that in the 4-taxa case, 1) most informative site patterns support the quartet displayed by the unrooted species tree over the two alternative quartets; and 2) the magnitude of this support is an increasing function of the internal branch of the unrooted species tree. We will make use of this observation in Section \ref{SS:unrooted_5} when discussing the consistency of parsimony in the unrooted $5$-taxa case. 
	
	\section{Where parsimony succeeds and fails across tree space}
	
	\subsection{Parsimony for the unrooted 5-taxa case} \label{SS:unrooted_5}
	
	There is only one possible shape for an unrooted $5$-taxa topology -- all topologies are of form $((ab)(cd)e)$ (for some permutation $a,b,c,d,e$ of $A,B,C,D,E$), i.e. the two pairs of sister taxa $a,b$ and $c,d$ are separated from taxon $e$ (Figure \ref{fig:5taxa_unrooted_species}a). Therefore, we will examine one labeled representative of this shape only, assuming that species tree has true unrooted topology $\overline{\psi_*} = ((AB)(CD)E)$. The most direct method to analyze the consistency of concatenated parsimony (and a method can be generalized to all unrooted concatenated counting methods) would proceed as follows: 
	\begin{enumerate}
		\item Find all rooted topologies $\psi_*$ that have unrooted topology $((AB)(CD)E)$;
		\item Compute $C(\overline{\psi} \mid \sigma)$ as a function of $\bm{x}$ for $\sigma = (\psi_*, \bm{x}, \bm{y})$ for each $\overline{\psi} \in \overline{\T}_5$;
		\item Check if $C(\overline{\psi} \mid \sigma)$  is minimized at $\overline{\psi}_*=((AB)(CD)E)$ for all choices of nonzero branch lengths $\bm{x}$.
	\end{enumerate}
	We will ultimately follow a procedure analogous to this in the unrooted 6-taxa case (Section \ref{SS:unrooted_6}). However, such an approach would overlook the particularly simple structure of parsimony in the unrooted $5$-taxa case. For the candidate unrooted topology $\overline{\psi} = ((ab)(cd)e)$, parsimony assigns a cost of $1$ to the site patterns $ab, abe, cd, cde$ and a cost of $2$ to all other site patterns. As a result, as the number of loci sampled grows large, concatenated parsimony will prefer the topology $((ab)(cd)e)$ for which the sum of lengths 
	\begin{equation*}
		P(\overline{\psi} \mid \sigma) \defeq \len{ab}{\sigma} + \len{abe}{\sigma} + \len{cd}{\sigma} +  \len{cde}{\sigma}
	\end{equation*}
	is maximal. Noting the irrelevance of taxon $e$ in $P(\overline{\psi} \mid \sigma)$, it should not be difficult to believe that $P(\overline{\psi} \mid \sigma)$ depends only on the restricted species tree $\sigma_{\mid abcd}$:
	\begin{align*}
		P(\overline{\psi} \mid \sigma) = \len{ab}{\sigma_{\mid abcd}} + \len{cd}{\sigma_{\mid abcd}}
	\end{align*}
	For a more rigorous argument of this fact, one may expand each of the lengths appearing in $P(\overline{\psi} \mid \sigma)$ using \eqref{eq:inversion2}. For instance, when expanding $\len{ab}{\sigma}$ and $\len{abe}{\sigma}$, we get (after canceling terms)
	\begin{align*}
		\len{ab}{\sigma} + \len{abe}{\sigma} &= -h(ab \mid \sigma) + h(abc \mid \sigma) + h(abd \mid \sigma) - h(abcd \mid \sigma) \\ 
		&= -h(ab \mid \sigma_{\mid abcd}) + h(abc \mid \sigma_{\mid abcd}) + h(abd \mid \sigma_{\mid abcd}) - h(abcd \mid \sigma_{\mid abcd}) \\ 
		&= \len{ab}{\sigma_{\mid abcd}}  
	\end{align*}
	where the second line follows as $e$ does not appear in any term $\pm h(\chi \mid \sigma)$ in the first line (hence, taxon $e$ may be omitted from the species tree). A similar argument shows $\len{cd}{\sigma} + \len{cde}{\sigma} = \len{cd}{\sigma_{\mid abcd}}$. 
	
	Since we are now dealing only with the 4-taxa tree $\sigma_{\mid abcd}$, we are in a position to apply Lemma \ref{lemma:quartet_isps}. In particular,
	$P(\overline{\psi} \mid \sigma)$ is indeed maximized at $\overline{\psi_*} = ((AB)(CD)E)$, because 1) $\sigma_{\mid ABCD}$ displays the quartet $AB|CD$, and 2) the length of the internal branch of $\overline{\sigma}_{\mid abcd}$ is maximal when $\{a,b,c,d\} = \{A,B,C,D\}$. (See Figure \ref{fig:5taxa_unrooted_species} for an illustration.) We therefore conclude that concatenated parsimony is statistically consistent in the unrooted $5$-taxa case under the MSC + infinite-sites model of evolution as presented in Section \ref{S:def_result}. 
	
	\begin{figure}[h]
		\centering
		\includegraphics[width=15cm]{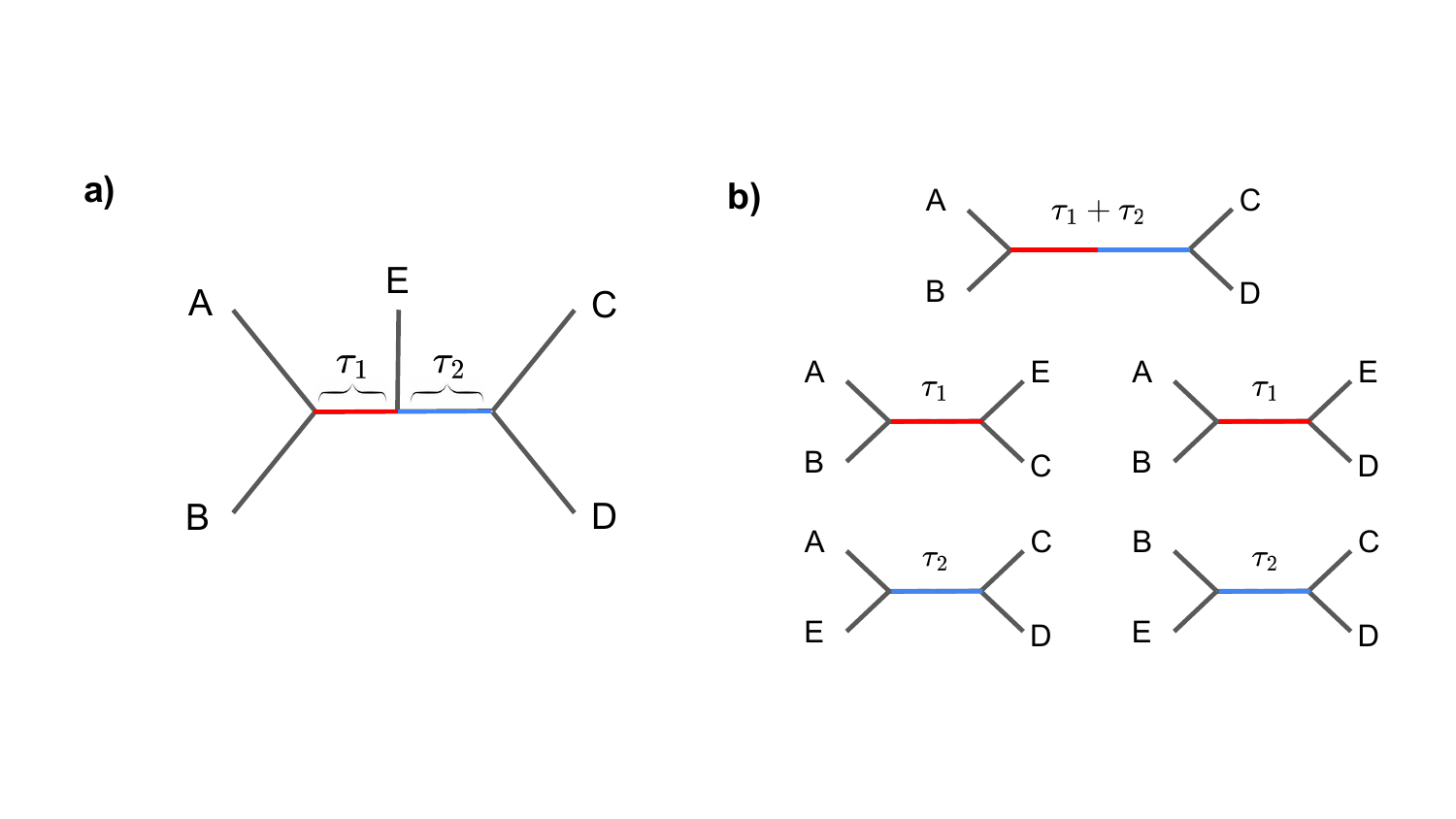}
		\caption{a) An unrooted 5-taxa species tree $\overline{\sigma}$ with unrooted topology $\overline{\psi}_*=((AB)(CD)E)$ and internal branch lengths $\tau_1, \tau_2$. b) The five quartet trees displayed by the unrooted species tree $\overline{\sigma}$, showing that $\overline{\sigma}_{\mid ABCD}$ has the longest internal branch among them. } \label{fig:5taxa_unrooted_species}
	\end{figure}
	
	We stress that if the assumptions of the MSC + infinite-sites model of evolution are not met, then we cannot use the above result to make a conclusion about the consistency of parsimony in this case. However, we expect that many of the strict assumptions that we made on the particular details of the infinite-sites model of mutation are not necessary for consistency. For example, in our proof of consistency we have used the fact that as the number of loci sampled grows large, most informative site patterns support the quartet displayed by $\sigma_{\mid abcd}$ over the two alternative quartets. Let us generalize this case to the scenario where the scaled mutation rate $\theta = 4N_e \mu$ is allowed to vary within and across species tree branches (though we will assume it remains bounded between two positive numbers). We could also allow variability across loci, but this requires some additional theoretical care. We consider the expected contribution of the $i^{\rm th}$ locus to the overall cost in \eqref{eq:tot_cost_unrooted}; we want to show that this expected contribution is minimal when the candidate topology is taken to be the quartet displayed by $\overline{\sigma}_{abcd}$ as compared to the two alternative quartets. 
	
	We begin by rescaling all time to be in mutation units, such that mutations fall on the gene tree $G_i$ at constant rate $1$. Note that the species tree may no longer be an ultrametric tree with respect to the branch lengths given in mutation units, which complicates our notion of height used throughout the paper. However, we can easily resolve this issue by appending additional length to the external branches of the species tree so that it becomes ultrametric again -- this is possible since no informative site patterns result from mutations occurring in these external branches. The rescaling of time also causes the pairwise coalescent rate to vary inversely to scaled mutation rate across the species tree, apparently further complicating our analysis. However, when time is measured in mutation units, sites in the alignment $A_i$ for locus $i$ that support a particular quartet $ab|cd$ within a $4$-taxa tree $\sigma_{\mid abcd}$ is still proportional (or in this case, equal) to $\ell^{\theta}(ab \mid \sigma_{\mid abcd}) + \ell^{\theta}(cd \mid \sigma_{\mid abcd})$, where we use the ``$\theta$`` superscript to denote that these expected lengths are in mutation units. We then can apply \eqref{eq:quartet_support}, which tells us that $\ell^{\theta}(ab \mid \sigma_{\mid abcd}) + \ell^{\theta}(cd \mid \sigma_{\mid abcd})$ is maximal when $h^{\theta}(ab \mid \sigma_{\mid abcd}) + h^{\theta}(cd \mid \sigma_{\mid abcd})$ is minimal. Using this observation, it is easy to verify (without any explicit calculation) that most informative site patterns will still support the quartet displayed by $\sigma_{\mid abcd}$ over the two alternative quartets. For example, if $\sigma_{\mid abcd}$ has an asymmetric topology $(((ab)c)d)$, then 
	\begin{align*}
		h^{\theta}(ab \mid \sigma_{\mid abcd}) &< h^{\theta}(bc \mid \sigma_{\mid abcd}) = h^{\theta}(ac \mid \sigma_{\mid abcd}) \\ 
		h^{\theta}(cd \mid \sigma_{\mid abcd}) &= h^{\theta}(ad \mid \sigma_{\mid abcd}) = h^{\theta}(bd \mid \sigma_{\mid abcd})
	\end{align*}
	which shows that $h^{\theta}(ab\mid\sigma_{\mid abcd}) + h^{\theta}(cd\mid \sigma_{\mid abcd})$ is less than $ h^{\theta}(ac\mid\sigma_{abcd}) + h^{\theta}(bd \mid\sigma_{\mid abcd}) $ and  $ h^{\theta}(ad\mid\sigma_{\mid abcd}) + h^{\theta}(bc \mid \sigma_{\mid abcd})$, as desired.  
	
	\subsection{The parsimony anomaly zone for the rooted 5-taxa case}  \label{SS:rooted_5}
	
	There are three possible shapes for rooted, 5-taxa topologies (i.e. three possible topologies up to relabeling). To analyze the consistency/inconsistency of parsimony across all of parameter space, it suffices to choose one labeled representative of each such shape. Accordingly, we let $\psi_1 = ((((AB)C)D)E), \psi_2 = (((AB)C)(DE)), \psi_3 = (((AB)(CD))E)$ be labeled representatives of these three shapes for 5-taxa, and define $[\psi_i]$ for $i=1,2,3$ to be the collection of all rooted topologies which agrees in unlabeled topology with $\psi_i$, i.e. $\psi \in [\psi_i]$ if and only if there is a permutation $\pi$ of $\{A,B,C,D,E\}$ such that $\pi(\psi) = \psi_i$. We then consider three species trees $\sigma_{5,i} \defeq (\psi_i, \bm{x})$ i.e. $\sigma_{5,i}$ is a species tree with topology $\psi_i$ and the particular branch length assignment $\bm{x}$ demonstrated in Figure \ref{fig:5taxa_labeled}.
	\begin{figure}[h]
		\centering
		\sbox0{\includegraphics{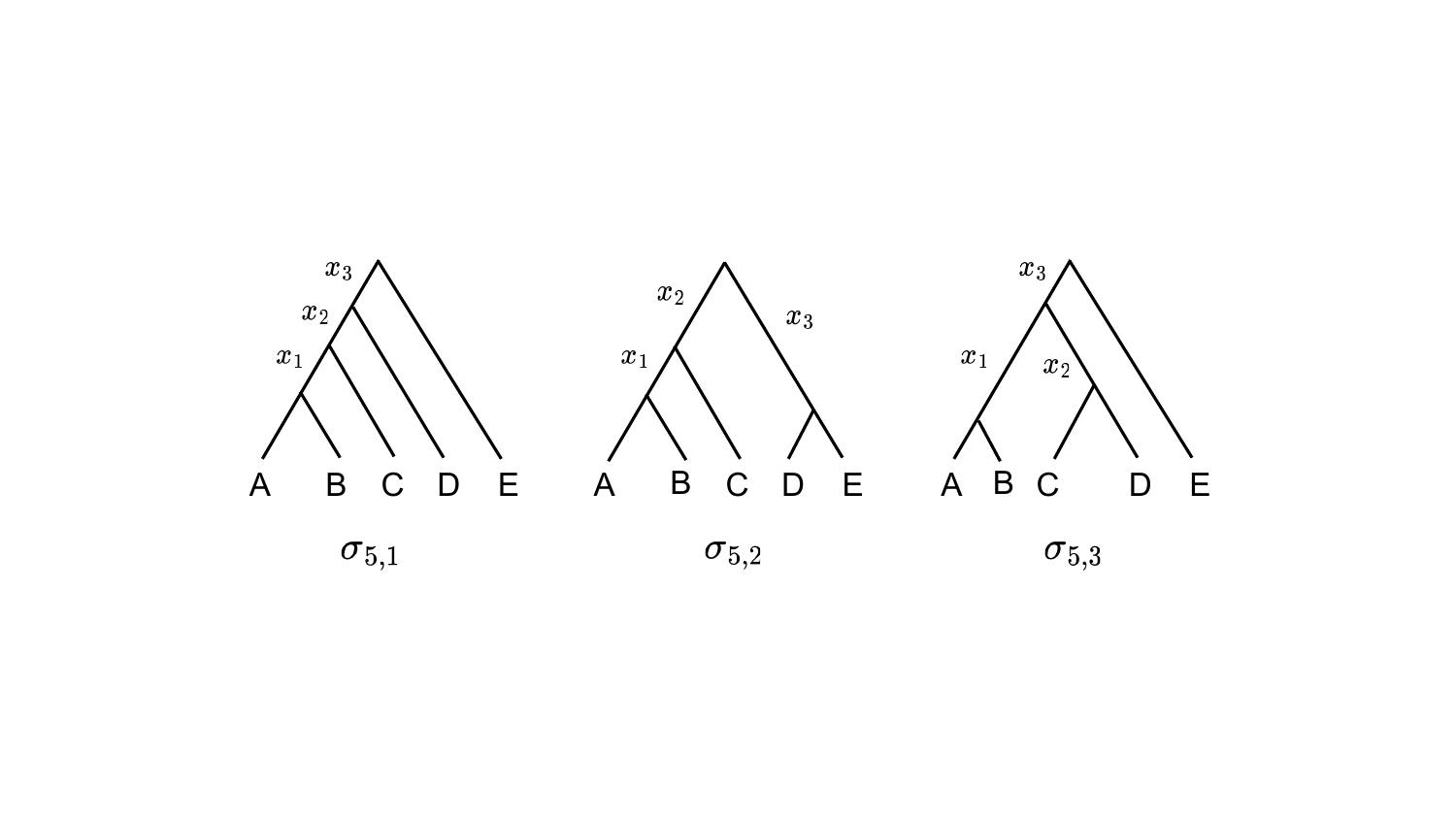}}
		\includegraphics[trim={0.05\wd0} {0.15\wd0} {0.05\wd0} {0.15\wd0}, clip, width=16cm]{Figure7.pdf}
		\caption{The species trees $\sigma_{5,1}, \sigma_{5,2}, \sigma_{5,3}$, with respective topologies $\psi_1, \psi_2, \psi_3$ and branch lengths $\bm{x} = (x_1, x_2, x_3)$. External branch lengths are irrelevant for analysis and are thus omitted. }\label{fig:5taxa_labeled}
	\end{figure}
	
	To begin, we can show that concatenated parsimony fails to be statistically consistent when the species tree has sufficiently short internal branch lengths. Applying the decomposition introduced earlier under a star tree (Section \ref{SS:st}), $\psi_3$ has a lower average cost per locus than $\psi_1$ and $\psi_2$: 
	\begin{equation*}
		\gencost{C}{\psi_3}{\star}  = \frac{43}{10} < \frac{13}{3} = \gencost{C}{\psi_1}{\star}  = \gencost{C}{\psi_2}{\star} 
	\end{equation*}
	Owing to the symmetry of the star tree, the choice of labeled representatives $\psi_1, \psi_2, \psi_3$ from $[\psi_1], [\psi_2], [\psi_3]$ is clearly irrelevant in the above relations. Thus, all $15$ candidate topologies with $\psi \in [\psi_3]$ are preferred over a true species tree topology $\psi_* \in [\psi_1] \cup [\psi_2]$ for sufficiently short internal branch lengths $\bm{x}$. However, this argument does not characterize the exact regions where concatenated parsimony will fail to be consistent. 
	
	To visualize the region of inconsistency, we define $K(\sigma)$ to be the number of candidate topologies preferred over the true species tree topology $\psi_*$:
	\begin{equation*}
		K(\sigma) := \sum_{\psi \neq \psi_* \in \T_5} 1\{\gencost{C}{\psi}{\sigma} \leq \gencost{C}{\psi_*}{\sigma}\}
	\end{equation*}
	We call a candidate topology $\psi$ that contributes $+1$ to $K(\sigma)$ as a parsimony anomalous gene tree (PAGT) for $\sigma$. The values of $K(\sigma)$, considered for a fixed species tree $\psi_*$ as a function of $\bm{x}$, form what we will call the parsimony anomaly zone for the topology $\psi_*$. To visualize the parsimony anomaly zone for the three representative species tree topologies $\psi_1, \psi_2, \psi_3$, we computed $K(\sigma_{5,i})$ as a function of $\bm{x}$ for $i=1,2,3$, varying $x_3 \in \{0,1/50,1/10,1/2\}$ and varying $x_1, x_3$ each across 400 uniformly spaced values in $[0,0.1]$. We interpolated between grid points to generate filled contours using the \textit{contourf()} function in \textit{matplotlib} \citep{Hunter:2007}. The results are given in Figure \ref{fig:PAGTS_5taxa}. 
	\begin{figure}[h]
		\centering 
		\sbox0{\includegraphics{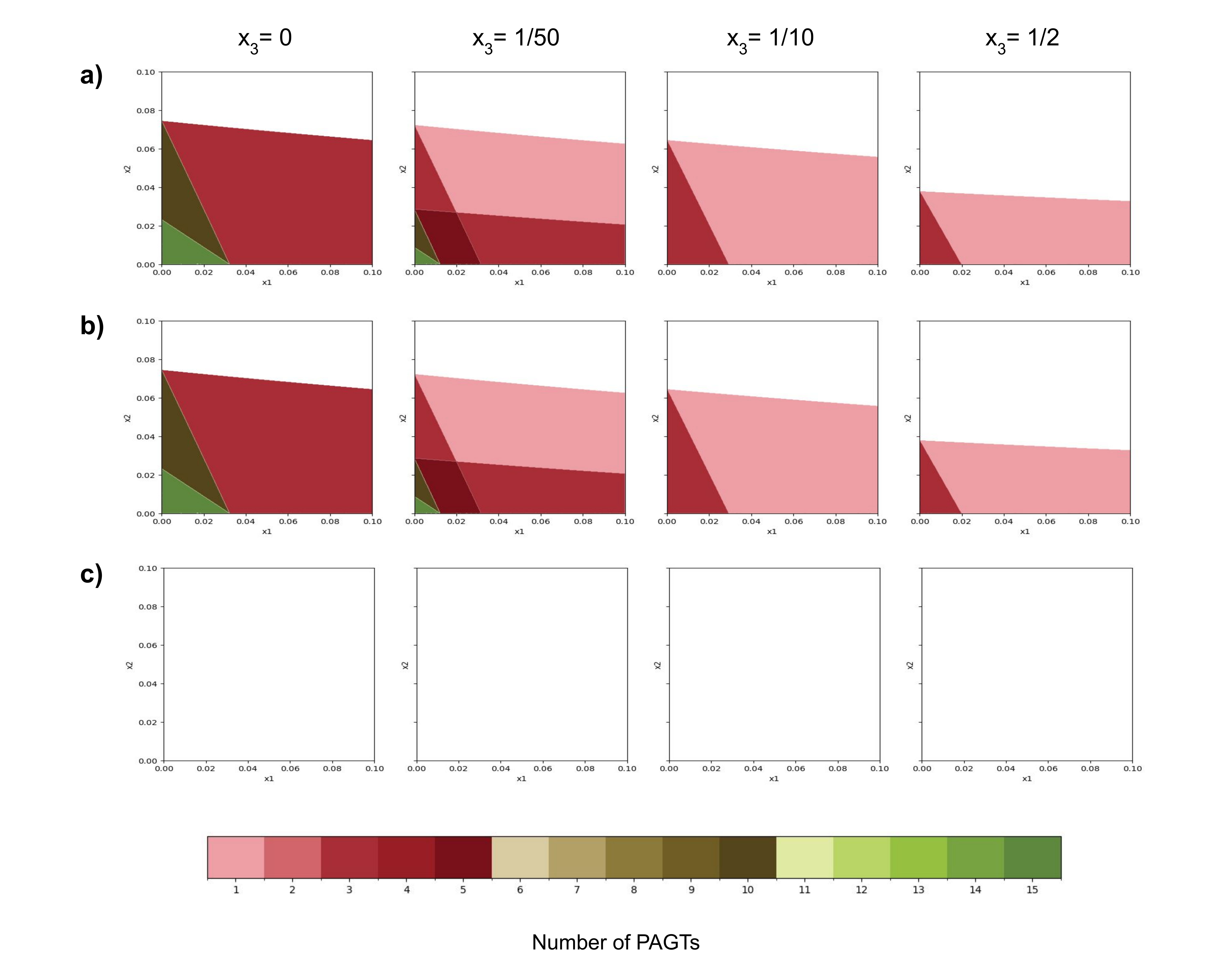}}
		\includegraphics[trim={.05\wd0} {.01\wd0} {.05\wd0} {0\wd0}, clip, width=15cm]{Figure8.pdf}
		\caption{Number of parsimony anomalous gene topologies preferred by parsimony over the true species tree topology for a) $\psi_1$, b) $\psi_2$, c) $\psi_3$, visualized a function of the internal branch lengths $\bm{x}$ of the corresponding species trees $\sigma_{5,1}$, $\sigma_{5,2}$ and $\sigma_{5,3}$ . Each column has a fixed value of $x_3$ (given in coalescent units), with $x_1$ and $x_2$ variable across $[0, 0.1]$ coalescent units in each plot.}\label{fig:PAGTS_5taxa}
	\end{figure}
	
	From Figure \ref{fig:PAGTS_5taxa}, we see that when the true species tree topology $\psi_*$ has an unlabeled topology agreeing with $\psi_3$, parsimony always prefers the true topology $\psi_*$, regardless of the branch lengths chosen. Unfortunately, researchers do not know when this topology is the true one, and so this fact on its own is not useful for applications. Meanwhile, when the species tree topology $\psi_* \in [\psi_1]$ or $\psi_* \in [\psi_2]$, there are regions in which parsimony prefers other topologies over the true species tree topology, i.e. regions in which parsimony will be statistically inconsistent. 
	
	By coincidence, the species tree topologies $\psi_1$ and $\psi_2$ appear to share an identically structured parsimony anomaly zone with the choice of branch lengths assignment for $\sigma_{5,1}$ and $\sigma_{5,2}$, as in Figure \ref{fig:5taxa_labeled}. However, the particular topologies that are anomalous may differ by region. In either case, we can observe there are exactly $15$ PAGTs near $\bm{x} = \bm{0}$, exactly as suggested by the star tree argument. It is then interesting to ask which (if any) of the $15$ topologies in $[\psi_3]$ are maximally anomalous for $\psi_i$ $(i=1,2)$, in the sense that they are always preferred over any other labeled variant in $[\psi_3]$ regardless of the choice of branch lengths $\bm{x}$ in $\sigma_{5,i}$. It turns out that this may be answered relatively easily, due to the expected cost per locus $C(\psi \mid \sigma_{5,i})$ for $i=1,2$ taking a surprisingly similar form for all $\psi \in \psi_3$. Recall the decomposition of the expected cost per locus into a 'species' and 'coalescence' term as in Section \ref{SS:st}: 
	\begin{equation*}
		C_{\rm sp} (\psi \mid \sigma) + C_{\rm coa} (\psi \mid \sigma) 
	\end{equation*}
	We have verified (see the Jupyter notebook code in \href{S:data}{Data availability}) that for a species tree topology $\psi_* \in [\psi_1] \cup [\psi_2]$, the coalescent term $C_{\rm coa} (\psi \mid \sigma_{5,i})$ is identical for all $\psi \in [\psi_3]$ for fixed $i=1,2$, and this term even agrees between the two species trees $\sigma_{5,1}, \sigma_{5,2}$ (which helps explain the identically shaped parsimony anomaly zones). In particular, letting $(X_1,X_2,X_3) = (\exp(-x_1) ,\exp(-x_2), \exp(-x_3))$, we have that for all $\psi \in [\psi_3]$,
	\begin{align*}
		& C_{\rm coa} (\psi \mid \sigma_{5,1}) = C_{\rm coa} (\psi \mid \sigma_{5,2}) \\ =& \,X_1 + X_2 + X_3 + \frac{1}{2}X_1X_2 + \frac{1}{2}X_2X_3 + \frac{1}{4}X_1X_2X_3 + \frac{1}{20}X_1X_2^3X_3
	\end{align*}
	In words, this result tells us that on average, ILS and gene tree discordance causes an equal additional parsimony cost to each $\psi \in [\psi_3]$ for a species tree with topology $\psi_* \in  [\psi_1] \cup [\psi_2]$. Therefore, the maximally anomalous topology $\psi \in [\psi_3]$ for $\psi_*=\psi_i$ ($i=1,2$) will be the one that minimizes the species term $C_{\rm sp} (\psi \mid \sigma_{5,i})$, i.e. the topology in $[\psi_3]$ that would be the most parsimonious in the absence of gene tree heterogeneity. For $\sigma_{5,1}$, this implies that the maximally anomalous topology is $\psi_3$, with $C_{\rm sp} (\psi \mid \sigma_{5,1}) = x_1+2x_2+x_3$. This expression follows since the site patterns $AB, ABCD$ both have a parsimony cost of $1$ on $\psi_3$, while the site pattern $ABC$ has a parsimony cost of 2 on $\psi_3$. Meanwhile, for the species tree $\sigma_{5,2}$, the topology $\widetilde{\psi}_3 = (((AB)(DE))C)$ is maximally anomalous.   
	
	Using the fact that $\psi_3$ is maximally anomalous for $\psi_1$, finding the exact shape of the parsimony anomaly zone may be found by solving $\gencost{C}{\psi_1}{\sigma_{5,1}} \geq \gencost{C}{\psi_3}{\sigma_{5,1}}$, obtaining an inequality on $x_1$ in terms of $x_2, x_3$: 
	\begin{equation}\label{eq:anom_boundary}
		x_1 \leq \log \left[ \frac{1-X_2+\frac{X_2^3X_3}{10}}{3x_2-X_3+X_3X_2} \right]
	\end{equation}
	Similar bounds may be found that define the exact regions in which other candidate topologies $\psi \in [\psi_3]$ are anomalous.  We can also partially visualize the overall geometry of the anomaly zone by calculating $K(\sigma_{5,i})$ for $i=1,2$ in the degenerate cases when $x_1=0$ or $x_2=0$. In particular, we varied the other two internal branch lengths $x_j, x_k$ across $[0,0.1]$ coalescent units, sampling $400$ uniformly spaced values for each $x_j, x_k$. The results are shown in Figure \ref{fig:PAGTS_5taxa_3d}. We report the results for $\sigma_{5,1}$ only, since we once again observed in our data that both $\sigma_{5,1}$ and $\psi_{5,2}$ gave an identically shaped parsimony anomaly zone.  
	\begin{figure}[ht]
		\centering 
		\includegraphics[width=15cm]{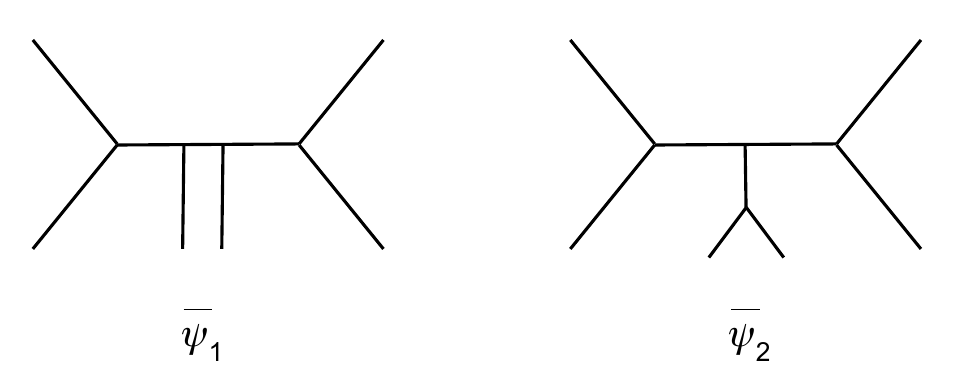}
		\caption{Number of topologies preferred by parsimony over the true species tree topology for the species tree $\sigma_{5,1}$ as shown in Figure \ref{fig:5taxa_labeled}. Each of the three surfaces shown corresponds to the cases $x_i=0$ for $i=1,2,3$, with $x_j, x_k$ for $j\neq k \neq i$ variable across $[0,0.1]$ coalescent units.}\label{fig:PAGTS_5taxa_3d}
	\end{figure}
	Of particular interest in Figure \ref{fig:PAGTS_5taxa_3d} is that when $x_2=0$, there is always at least one PAGT (namely, the maximally anomalous topology $\psi_3$) regardless of how large $x_1$ and $x_3$ are. This demonstrates that it is in general necessary for all branch lengths $x_i$ to exceed some critical threshold $x_{\rm min}$ in order to guarantee the consistency of parsimony. To find this threshold, assume we have a species tree $\sigma = (\psi_1, \bm{x})$ with all internal branches having the same length $x_i = x$ for some $x > 0$. Making this substitution in \eqref{eq:anom_boundary} and setting both sides equal, we find a solution of $x_{\rm min} \approx 0.062205$ coalescent units. In comparison, for a species tree with topology $\psi_1$, \cite{rosenberg2008discordance} showed that the critical threshold of minimum branch length for the most probable gene tree topology to agree with the species tree topology was $x_{\rm \min} \approx 0.1935$ coalescent units, so parsimony still outperforms the democratic vote method by a non-trivial margin in the rooted 5-taxa case.
	
	\subsection{The parsimony anomaly zone for the unrooted 6-taxa case} \label{SS:unrooted_6}
	
	When computing the number of PAGTs in the unrooted 6-taxa case, the definition of the number of PAGTs $K(\sigma)$, must be updated to use unrooted topologies of $\overline{\T}_6$:
	\begin{equation*}
		K(\sigma) = \sum_{\overline{\psi} \neq \overline{\psi_*} \in \overline{\T}_6} 1\{\gencost{C}{\overline{\psi}}{\sigma} \leq \gencost{C}{\overline{\psi}_*}{\sigma}\}
	\end{equation*}
	We may again use the idea of reducing to a star tree to show the inconsistency of parsimony in this case: consider two unrooted topologies $\overline{\psi}_1, \overline{\psi}_2$ with shapes as shown in Figure \ref{fig:6taxa_shapes}. The expected costs that arise when $\sigma$ is a star tree amount to 
	
	\begin{equation*}
		\gencost{C}{\overline{\psi}_2}{\star} = \frac{43}{10} < \frac{13}{3} =  \gencost{C}{\overline{\psi}_1}{\star}
	\end{equation*}
	\begin{figure}[h]
		\centering
		\includegraphics[width=10cm]{Figure10.pdf}
		\caption{The two possible unrooted binary tree shapes for 6 taxa. Labels are omitted: $\overline{\psi}_1$ and $\overline{\psi}_2$ may be taken to be any unrooted labeled topologies with these respective shapes.  }\label{fig:6taxa_shapes}
	\end{figure}
	
	Therefore, parsimony fails to be statistically consistent for the unrooted 6-taxa case; if the true unrooted topology is $\overline{\psi_*}=\overline{\psi_1}$, then parsimony will always prefer one of the 15 possible unrooted topologies with shape agreeing with $\overline{\psi}_2$ for sufficiently short branch lengths $\bm{x}$. This preference was already known \citep[Equation 5]{roch2015likelihood}: in particular, it has been demonstrated that under a JC69 model of mutation, the expected difference in parsimony score per locus between $\overline{\psi}_1$ and $\overline{\psi_2}$ in the limit of a star tree is $\frac{\theta}{60} + O(\theta^2)$, where $\theta/2$ is the scaled mutation rate. This matches with our result, since we have
	\begin{equation*}
		\frac{\theta}{2} \cdot \left[\gencost{C}{\overline{\psi}_1}{\star} - \gencost{C}{\overline{\psi}_2}{\star} \right] = \frac{\theta}{2} \cdot \frac{1}{30} = \frac{\theta}{60}.
	\end{equation*}
	To visualize where exactly parsimony fails, we work with the six possible shapes of rooted $6$ taxa topologies. For each, we may in theory compute the number of PAGTs for each for any given choice of branch lengths $x_1, x_2, x_3, x_4$. However, since we have four degrees of freedom in choosing these branch lengths, performing a full analysis of parameter space (say by fixing two of the $x_i$ and varying the other two) to get a plot analogous to Figure \ref{fig:PAGTS_5taxa} is impractical. Instead, we create a plot of $K(\sigma)$ in the species case where all branch lengths are identical ($x_i=x$ for $i=1,2,3,4$ and $x \geq 0$) for each of the six shapes of topologies. The result is given in Figure \ref{fig:6taxa_pagts}.

	\begin{figure}[h]
		\centering
		\includegraphics[width=12cm]{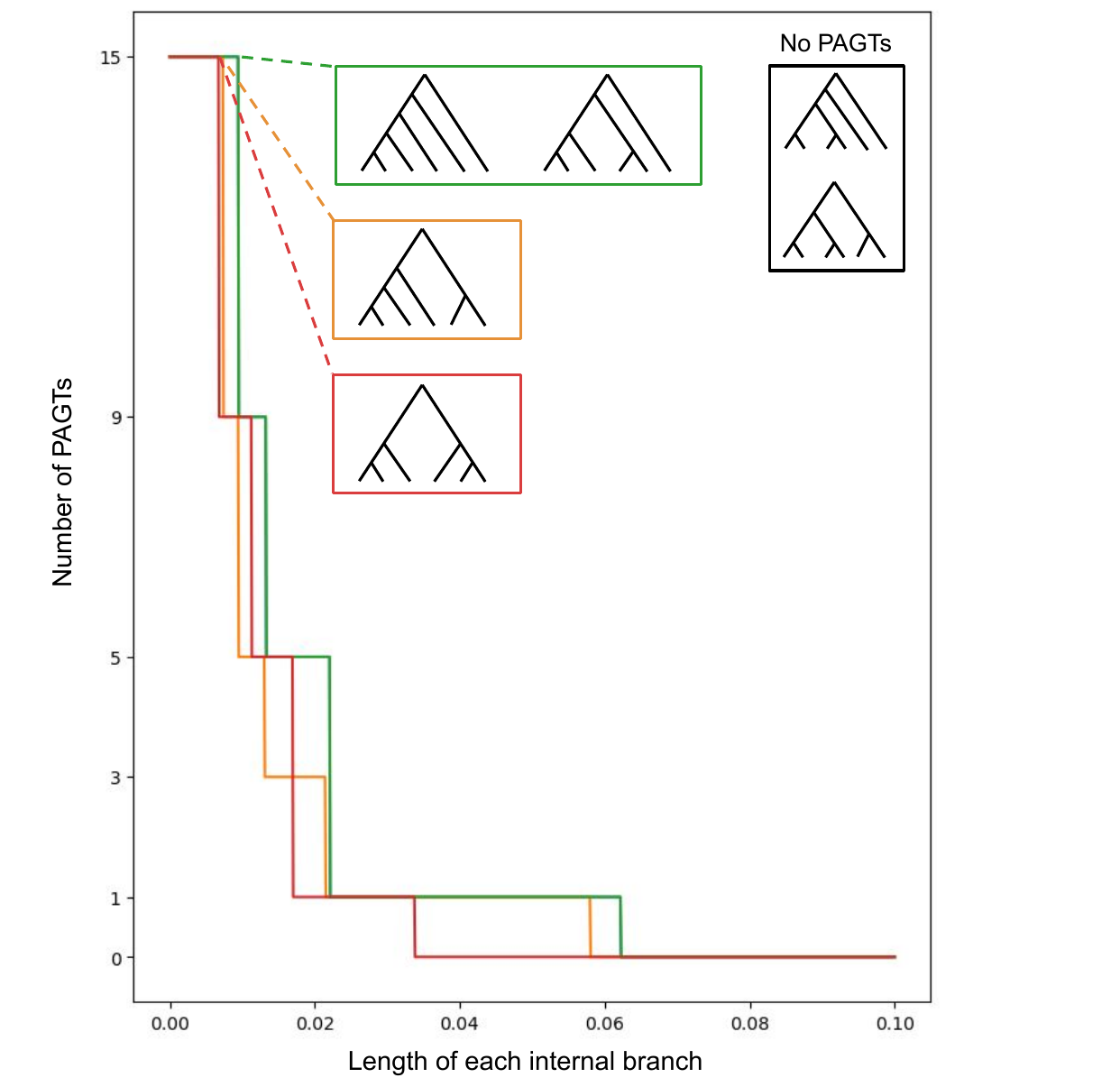}
		\caption{The number of parsimony anomalous gene trees (PAGTs) (vertical axis) for each possible rooted 6-taxa topology up to labeling, when all branch lengths have the same length (horizontal axis, coalescent units). Four topologies have any PAGTs, while two do not.}\label{fig:6taxa_pagts}
	\end{figure}
	We can again see that, identically to the rooted 5-taxa case, once the minimum branch length exceeds $x_{\rm min} \approx 0.062205$, unrooted parsimony is guaranteed to be consistent for any unrooted 6-taxa topology. The numerical agreement of the $x_{\rm min}$ needed for consistency between the rooted 5-taxa and unrooted 6-taxa case is perhaps not terribly surprising: we conjecture that such a result holds true between the rooted $n$-taxa and unrooted $(n+1)$-taxa cases for all $n\geq 5$. 
	\section{Discussion}
	
	Prior to the publication of \cite{kubatko2007inconsistency}, maximum likelihood (ML) analyses of concatenated datasets dominated phylogenetics. However, the demonstration in \cite{kubatko2007inconsistency} that concatenated ML was inconsistent when ILS was high caused a huge explosion of research into methods that are robust to gene tree discordance (e.g. ASTRAL \citep{mirarab2014astral, zhang2018astral}, MP-EST \citep{liu2010maximum}, STAR \cite{liu2009phylogenetic}). Rather than concatenate all loci, these methods instead consider each gene tree separately. Despite theoretical guarantees of consistency, such gene tree-based methods may suffer because of errors in inferring individual tree topologies from short sequences \citep{molloy2018include}. Because longer (and likely therefore concatenated) alignments offer several advantages over shorter sequences (reviewed in \citet{bryanthahn2020}), there is still a desire for concatenation methods that are robust to ILS.
	
	Results in both \cite{liu2009phylogenetic} and \cite{MendesHahn2018} found that concatenated parsimony under an infinite-sites model was consistent when applied to rooted 4-taxa trees, and we have confirmed the consistency of concatenated parsimony in the unrooted 5-taxa case under the same conditions. While concatenated parsimony had shown to be inconsistent when applied to unrooted trees with 6+ taxa in \cite{roch2015likelihood}, it was not entirely clear if this result was applicable to any biologically realistic species trees, as argued in \cite{bryanthahn2020}. The results presented here confirm that concatenated parsimony is inconsistent for a small, but non-trivial region of tree space for 5 or more taxa in the rooted case, or 6 or more taxa in the unrooted case. In this sense, the results of consistency for concatenated parsimony in the rooted 4-taxa/ unrooted 5-taxa cases appear to be a coincidence owing to the both the low-dimensionality of tree space and simplicity of the parsimony cost function in these scenarios. For instance, in the unrooted 5-taxa case, there is only one possible species tree topology up to relabeling, and to infer this topology it suffices to find the quartet $ab|cd$ with the longest internal branch length in the unrooted species tree. Concatenated parsimony does this successfully by determining the quartet $ab|cd$ supported by the most informative site patterns. 
	
	Although directly applying parsimony no longer appears to be a viable option for consistent inference from concatenated data, there are other options for statistically consistent estimation of the species tree topology. Since there is no anomalous region for parsimony in the unrooted 4-taxa case under a MSC + infinite-sites model of mutation \citep{MendesHahn2018, molloy2022theoretical}, it is possible to use parsimony to estimate the quartet $ab|cd$ for each choice of four taxa $a,b,c,d$, and then to find a tree that agrees with the greatest number of inferred trees, which is the approach taken by the methods SDPQuartets and ASTRAL-BP \citep{springer2020ils}.  Alternatively, for more general models of mutation, one may use SVDQuartets \citep{chifman2014quartet, chifman2015identifiability} to infer quartets, but this requires direct iteration through a significant portion of the $\binom{n}{4}$ possible choices of four taxa in the $n$-taxa case to guarantee accurate reconstruction, which may become prohibitively computationally expensive for large $n$. A comparable approach is CASTER \citep{zhang2025caster}, a generalization of the concatenated counting methods examined in this paper to more general models of mutation. CASTER estimates quartets in a similar manner to parsimony, though with the addition of a negative weight to some parsimony uninformative site patterns. One major advantage of CASTER over SVDQuartets is that the optimization step of CASTER avoids listing all $\binom{n}{4}$ possible quartets. 
	
	Meanwhile, concatenated distance methods, which only require iterating through the $\binom{n}{2}$ pairs of taxa in the concatenated alignment, have shown great theoretical promise for the consistent estimation. Both \cite{liu2009phylogenetic} and \cite{MendesHahn2018} found that concatenated neighbor joining (NJ) was consistent on rooted 4-taxa trees in the presence of ILS, and positive theoretical results regarding the consistency of concatenated distance methods were later examined in \cite{dasarathy2015distance} (which proposed the method METAL) and \cite{allman2019species} (which examined the use of the log-det distance). Despite these encouraging results, a recent study has found that both SVDQuartets and METAL do not perform as well as concatenated ML when using similar amounts of data in a real-life avian  dataset (\cite{braun2024testing}). As an alternative to SVDQuartets and concatenated distance methods, the mixtures across sites and trees (MAST) model of \cite{wong2024mast} has all of the advantages of concatenated ML, but allows the alignment to come from a set of alternative topologies. This approach has been found to be consistent in simulations, but more theoretical work is needed to prove its consistency more broadly.
	
	In order to examine the regions of inconsistency for concatenated parsimony, we have introduced a new mathematical method for estimating the total length of branches subtending a given sub-tree over a large number of independent and identically distributed gene trees. Under an infinite-sites model, this total length is proportional to the expected number of the corresponding informative site pattern in a concatenated alignment across all gene trees \citep{MendesHahn2018}. While this method may have further uses, it is important to point out that its relevance here depends on the infinite-sites assumption, though not necessarily on the assumption of a strict molecular clock. Future work using alternative mutation models (e.g. Jukes-Cantor) may allow our approach to be used in a wider range of scenarios and to be compared directly with work using alternative approaches \citep{roch2015likelihood, zhang2025caster}.

	\section*{CRediT authorship contribution statement}
	\textbf{D.A. Rickert}: Methodology, Software, Formal Analysis, Investigation, Visualization, Writing - Original Draft, Writing - Review \& Editing.
	\textbf{W.-T. Fan}: Writing - Review \& Editing, Formal Analysis, Funding acquisition.
	\textbf{M.W. Hahn}: Conceptualization, Methodology, Writing - Original Draft, Writing - Review \& Editing, Supervision, Project Administration, Funding acquisition.
	
	\section*{Acknowledgements}
	\noindent This work was supported by National Science Foundation grants DBI-2146866, DMS-2152103, DMS-2348164. We thank Dr. Noah Rosenberg and two anonymous reviewers whose comments helped improve this work. 
	
	\section*{Conflicts of interest}
	\noindent The authors declare no conflicts of interest. 
	
	\section*{Data availability}\label{S:data}
	\noindent Code used in the creation of quantitative figures and results is available through the folowing GitHub repository: \url{https://github.com/darickert/exp-branch-lengths}

\end{document}